\documentclass[11pt,a4paper]{article}

\usepackage[utf8]{inputenc}
\usepackage[T1]{fontenc}
\usepackage{lmodern}
\usepackage{amsmath}
\usepackage{amssymb}
\usepackage{amsthm}
\usepackage{booktabs}
\usepackage{graphicx}
\usepackage{longtable}
\usepackage{array}
\usepackage{float}
\usepackage{microtype}
\usepackage[margin=1in]{geometry}
\usepackage[colorlinks=true,linkcolor=blue,citecolor=blue,urlcolor=blue]{hyperref}

\theoremstyle{definition}
\newtheorem{proposition}{Proposition}
\newtheorem{lemma}{Lemma}
\newtheorem{theorem}{Theorem}
\newtheorem{corollary}{Corollary}
\newtheorem{definition}{Definition}
\newtheorem*{definitionstar}{Definition}
\newtheorem*{remark}{Remark}
\makeatletter
\def\thmhead@plain#1#2#3{%
  \thmname{#1}\thmnumber{\@ifnotempty{#1}{ }\@upn{#2}}%
  \thmnote{ {\the\thm@notefont#3}}}
\let\thmhead\thmhead@plain
\makeatother

\newcommand{\proofstatus}[1]{[\textsc{PROOF-STATUS}: \textsc{#1}]}

\newcommand{\rhosub}{\rho_{\mathrm{sub}}}
\newcommand{\rhoroute}{\rho_{\mathrm{route}}}

\title{One Gate Is Not Enough: Composing Stateful Pre-Action Controls for Agentic AI}
\author{Gaston Besanson\thanks{Universidad Torcuato Di Tella}}
\date{}

\begin{document}
\maketitle

\noindent Companion artifact: \texttt{suite-demo} (open data, deterministic, Apache 2.0).

\begin{abstract}
Agentic AI systems take consequential actions governed by more than one concern at once: is the agent permitted to act, can the organisation afford the action, and is the evidence behind it valid. Prior work treats these as separate pre-action gates: SARC for obligations and permissions (arXiv 2605.07728), Green SARC for predictive cost and carbon budgets (arXiv 2606.15954), and SARC-DQ for metadata-borne evidence validity (arXiv 2607.26313). Prior research studies composition of enforcement monitors and policy decisions; this paper studies a narrower problem that arises when one control can remediate the action, evidence, or derived context consumed by another control. We formalize remediation-induced control coupling, in which a control transformation invalidates another control's earlier judgment. We show that naive single-pass composition can consequently be unsound, and give a remediate-and-regate protocol that restores per-action soundness in the current bounded, idempotent setting under its stated assumptions. We further show that the two implemented remediation operators, evidence substitution and resource-budget downroute, do not commute -{}- a finite-model checker finds concrete counterexample instances -{}- making remediation order part of the control-plane semantics rather than an implementation detail. At the state level, we demonstrate that currently admissible observations can contaminate future governance state when uncovered defects are promoted into a governed evidence buffer; two mitigations reduce, not eliminate, that exposure. Supporting results establish the exact condition under which positive-weight linear aggregation of gate outcomes can compensate a member veto (a vetoed action is admitted exactly when the admission threshold does not exceed the aggregator's largest leave-one-out weight sum), a unified cross-control Evidence Set preserving full lineage across gates, and the fact that composition manufactures no new detection coverage: classes no member gate covers remain uncovered, reported honestly. A deterministic artifact composing the three published engines unmodified validates the mechanisms over 30 pre-registered seeds and two workflows: CH1-CH5 meet their registered decision rules across all 30 pre-registered seeds; CH6 does so under W1 but not under the smaller W2 workflow, reported as such. The artifact is a mechanism demonstration, not a production-prevalence study.
\end{abstract}

\section{Introduction}

A control that changes the action it governs can invalidate the decisions of controls that evaluated the pre-remediation action. A pre-action gate is a deterministic checkpoint between an agent's decision and its execution. Three families of pre-action concern are separately established: authority, resources, and evidence. Deployed systems need all three simultaneously.

Prior research has studied the composition of enforcement monitors and policy decisions: serial and parallel monitor composition, policy-combining algorithms (deny-overrides and related patterns), non-compensatory veto rules in multi-criteria decision analysis, stateful runtime enforcement, and the confluence and termination theory of rewrite systems are all established literatures. This paper does not claim to invent the composition of pre-action controls, and does not claim that the composition question itself is untreated. What remains underexplored is the composition of heterogeneous, stateful pre-action controls in which one control's remediation mutates the action, evidence, or derived context evaluated by another control, thereby invalidating previously computed judgments and creating order-dependent behavior. That distinction -{}- heterogeneous control semantics, different control-local state, action mutation, evidence mutation, derived-context recomputation, budget state, governance-buffer state, remediation ordering, cross-control lineage, and execution-time re-evaluation, together -{}- is this paper's actual subject, narrower than a claim that composition per se is untreated, and more defensible.

The question is not academic. The evidence gate's designed remediation, quarantine-and-substitute from a governed buffer, changes the acted-on value. A substituted unit cost changes the order quantity a replenishment agent proposes, which changes the order value the authority cap must judge and the predicted spend the resource gate must budget. A plane that evaluates all gates once, in parallel, on the original action is judging an action that will not be the one executed. Once a second remediator is introduced -{}- a resource gate that, rather than simply blocking an over-budget action, can scale its quantity down to the largest budget-feasible amount -{}- the same question recurs one level up: does it matter which remediator runs first.

\noindent\textbf{Contributions.} This paper makes four primary contributions. First, it formalizes remediation-induced control coupling: a pre-action control that transforms an action can invalidate another control's earlier judgment (Section 2, Section 4, scenario S4). Second, it gives and evaluates a remediate-regate composition protocol for the current idempotent setting, under which every executing action is re-evaluated by all member controls after remediation (Section 4, Theorem 1, CH1). Third, it shows that heterogeneous remediation operators need not commute, making remediation order part of the control-plane semantics rather than an implementation detail (Section 5, CH7). Fourth, it demonstrates that governance state itself can become contaminated when admissible but uncovered defects are promoted into future remediation state, motivating explicit trust policies for state updates (Section 6, CH6). Non-compensatory join semantics (Section 3, machine-checked exhaustively over the finite response lattice), a unified Evidence Set with cross-gate lineage preservation (Section 7), and a no-manufactured-coverage property verified by injecting two declared-uncovered classes (Section 8, CH4) support these four primary contributions. The empirical artifact underneath all of them is a deterministic open-data composition of the three published Apache 2.0 engines unmodified, with hypotheses CH1 to CH8 pre-registered before the corresponding code ran, and a 30-seed statistical sweep reporting means with 95\% confidence intervals (Section 8).

\noindent\textbf{Scope honesty.} This paper claims composition semantics and a mechanism demonstration. It does not claim production prevalence, uses no model calls, and its artifact is not a GIGO-Bench release.

\section{Model and terminology}

An action $a$ is proposed in context $c(a) = (\mathrm{agent}, \mathrm{role}, \mathrm{parameters}, \mathrm{value}, \mathrm{predicted\ resource\ use})$. An evidence set $E(a)$ is the finite sequence of records $a$ relies on; each record is a payload plus metadata (source, as-of, retrieval time, version, lineage) with a content-addressed identifier $\mathrm{eid}(r)$. A gate $G_i$ maps $(a, c, E, s_i)$ to a response in $R = \{\mathrm{admit}, \mathrm{substitute}, \mathrm{degrade}, \mathrm{escalate}, \mathrm{block}\}$, where $s_i$ is gate-local state (for the resource gate, remaining budgets; for the evidence gate, the governed buffer). Partition $R$ into $\mathrm{Exec} = \{\mathrm{admit}, \mathrm{substitute}, \mathrm{degrade}\}$ and $\mathrm{Held} = \{\mathrm{escalate}, \mathrm{block}\}$: this Exec/Held partition, not the five-level total order below, is the load-bearing distinction for execution safety, and is what a plane's soundness (Definition 3) is stated in terms of.

\setcounter{definition}{2}
\begin{definition}[(per-action soundness)]
A plane is sound for $a$ if the executed action $a_{\mathrm{exec}}$ satisfies every gate at execution time, on the state in force when $a_{\mathrm{exec}}$ runs.
\end{definition}

\begin{remark}[(sequential coupling)]
Resource-gate state decrements as actions execute; soundness here is per action given current state; stream-level ordering across actions is out of scope and flagged in Section 10.
\end{remark}

A remediation operator $\rho$ is a control's own transformation of an action on a violation it is designed to correct rather than merely block: it maps an action $a$ to a remediated action $\rho(a)$, typically also changing $c(\rho(a))$. Section 4 introduces the evidence gate's substitution operator $\rhosub$; Section 5 introduces the resource gate's downroute operator $\rhoroute$.

\begin{definitionstar}[(remediation-induced control coupling)]
Two controls $G_i$ and $G_j$ are remediation-coupled under remediation operator $\rho_i$ if there exists an executable-reachable action $a$ such that $G_j(a) \neq G_j(\rho_i(a))$.
\end{definitionstar}

\noindent This concept is reusable independently of SARC, Green SARC, or SARC-DQ: any pre-action control whose remediation changes a value another control reads exhibits it. Scenario S4 (Section 4) is this artifact's concrete witness.

\noindent\textbf{Coupling graph.} Define a directed graph $\mathcal{C} = (\mathcal{G}, \mathcal{E}_C)$ where $(G_i, G_j) \in \mathcal{E}_C$ iff there exists a reachable action $a$ such that $G_j(a) \neq G_j(\rho_i(a))$; this is the remediation-induced control coupling graph. For the current example: $DQ \to Authority$ and $DQ \to Resource$. A resource downroute can similarly affect values consumed by other controls.

\noindent\textbf{Design rule (judgment invalidation).} If remediation by control $G_i$ can change any input consumed by control $G_j$, the earlier judgment of $G_j$ must be treated as stale for the remediated action and recomputed before execution. When the dependency graph is incomplete or dynamic, full regating is the conservative strategy.

This paper distinguishes three progressively harder composition problems, in increasing order of difficulty.

\noindent\textbf{Constraint composition.} Multiple controls judge the same action without any of them transforming it. When controls represent hard requirements, they behave as constraints rather than compensable preferences: an action is executable only if it remains inside the intersection of the relevant feasible regions, $F = F_{\mathrm{authority}} \cap F_{\mathrm{resource}} \cap F_{\mathrm{evidence}}$. This is the conceptual interpretation for the non-compensatory join semantics of Section 3.

\noindent\textbf{Remediation-induced control coupling.} A remediation applied by one control can change variables another control consumes: the evidence gate substitutes unit cost; unit cost changes order value; order value changes the authority decision; unit cost also changes predicted spend; predicted spend changes the resource decision. Section 4 studies this problem for a single remediation operator.

\noindent\textbf{Remediator interaction.} When more than one control can transform the action, remediation operators can interact non-commutatively, making remediation order part of governance semantics rather than an implementation detail. Section 5 studies this problem; it is one of this paper's strongest results.

The dependency this creates is the paper's central intuitive argument: remediation changes inputs consumed by other controls.

\begin{verbatim}
Evidence
   |
   v
Evidence Gate
   |
   | substitute unit cost
   v
Remediated Action
   |
   +------> recompute order value ------> Authority Gate
   |
   +------> recompute predicted spend --> Resource Gate
   |
   v
Full Regate
   |
   v
Non-compensatory Join
   |
   v
Execute / Hold
\end{verbatim}

\section{Hard constraints versus compensatory aggregation}

\setcounter{definition}{0}
\begin{definition}[(restrictiveness order)]
Order $R$ by permissiveness: $\mathrm{admit} \sqsubseteq \mathrm{substitute} \sqsubseteq \mathrm{degrade} \sqsubseteq \mathrm{escalate} \sqsubseteq \mathrm{block}$; $(R, \sqcup)$ is a join semilattice.
\end{definition}

\begin{definition}[(composed verdict)]
For responses $r_1..r_n$ on the same $(a, c, E)$, the composed response is the join $r^* = \sqcup_i r_i$, the composed admitted bit is $[r^*$ in $\mathrm{Exec}]$, and the winner gate is any argmax (ties recorded).
\end{definition}

Represent each gate's outcome as a score $s_i$ in $[0,1]$ with $s_i = 0$ iff the gate holds the action, and let an aggregator $f$ admit $a$ iff $f(s) \geq \tau$ with $\tau > 0$. An earlier draft of this paper claimed that any strictly increasing $f$ with some admissible profile violates veto -{}- an independent review refuted this: for $f(s) = s1+s2+s3$ and $\tau = 2.5$, $f(1,1,1) = 3$ is admissible, yet no profile with a coordinate at 0 can reach 2.5 (its maximum there is 2), so veto is fully preserved despite $f$ being strictly increasing and having an admissible profile. The universal form was wrong to treat those two hypotheses as sufficient; whether compensation is possible depends on the relationship between $\tau$ and $f$'s own weight structure. The deeper distinction this section formalizes is not merely min versus weighted sum: it is constraint satisfaction versus preference aggregation. Weighted, additive aggregation can be useful for selecting among already-feasible alternatives, but for hard governance requirements where member controls are meant to hold veto authority, additive preference aggregation is not a substitute for conjunctive feasibility semantics -{}- the linear-family lemma below gives the exact threshold condition for when a weighted-sum aggregator does or does not preserve that veto property, rather than assuming it either way. This paper is not arguing against weighted aggregation. It is arguing against using compensatory preference aggregation to implement controls whose semantics require independent veto authority.

\noindent\textbf{Lemma (linear family; general arbitrary-$n$ statement retagged pending-human-review, finite $n=3$ encoding split into Corollary 2, per independent review round two finding R2-N2).} Let $f(s) = \sum_i w_i s_i$ with $w_i \geq 0$, not all zero, and let $L_j = \sum_{i \neq j} w_i$ (the largest score achievable with $s_j = 0$), $L^* = \max_j L_j$. A vetoed profile ($s_j = 0$ for some $j$) is admitted by $f$ if and only if $\tau \leq L^*$. Min never admits a vetoed profile, for any $\tau > 0$, regardless of weights. Full proof, the reviewer's counterexample worked through against this exact condition, and the discrete instantiation's exhaustive machine check (including 200 HEAD-derived random weight-vector probes), in Appendix A.

Consequence: the composed verdict of Definition 2 is the ordinal form of min on admissibility, the cross-gate generalisation of ``any failed predicate blocks''. The linear-family lemma above is therefore best read as an exact characterization of when an additive aggregator does or does not preserve the declared veto semantics, not as a claim that additive scoring is inherently unsafe.

\noindent\textbf{CH5 (compensation is empirically unsafe, discrete grid).} Over the pre-registered weight/threshold grid (66 weight vectors x 4 thresholds, 264 cells), does any cell's weighted-score aggregator admit at least one of the 98 min-join-Held profiles out of the full 125-point three-gate verdict space, while the min join admits zero of them by construction. Result: \texttt{proposition\_1\_holds\_exhaustively = True}. Two distinct quantities are reported here, with distinct labels and distinct denominators, since they are not the same measurement: the \textbf{formal result} is the maximum number of compensated response profiles at a single grid cell, 48 out of 98 Held response profiles, from the exhaustive discrete checker (\path{checkers/compensation_check.py}); the \textbf{empirical result} is the mean number of compensated held decision instances pooled across scenarios S1-S4, over the 30-seed sweep: 13775.1 (95\% CI [13753.5, 13796.6], n=30) (\texttt{ch5\_aggregator.py}, \path{out/results/sweep_summary.json}). The two are not comparable by denominator (one counts formal response profiles out of the full discrete grid's Held population; the other counts individual decision instances actually produced across four scenarios' worth of simulated traffic), and this paper does not conflate them under one name.

\section{Single-remediator composition}

\noindent\textbf{Proposition 2.} If no gate rewrites $(a, c, E)$, the composed verdict is invariant to evaluation order and duplication, and adding a gate never increases permissiveness. Proof, and exhaustive machine check over the finite response lattice (17151 individual checks, all\_hold = True), in Appendix A.

The evidence gate is not read-only: on a substitutable violation it returns substitute with a governed value $v'$, defining $\rhosub(a) = a[v \to v']$, and $c(\rhosub(a))$ may differ from $c(a)$: order quantity, order value, and predicted spend can all move. This is remediation-induced control coupling (Section 2) in its simplest form: one remediation operator, applied once.

\noindent\textbf{Proposition 3 (single-pass unsoundness, scoped to the implemented construction).} For the substituting-DQ-decision construction below, single-pass evaluation on $(a, c(a), E(a))$ yields a composed Exec verdict whose executed action $\rhosub(a)$ violates the authority or resource gate at execution time; and symmetrically, single-pass holds an action whose remediated form is compliant. Construction: place an authority cap $\kappa$ strictly between the order values induced by the corrupted read and by the governed substitute. The artifact instantiates this as scenario S4 across all 30 registered seeds; the claim is not made for arbitrary continuous economics or arbitrary gate implementations beyond this construction (independent review classification: checked-scope-only). Full proof in Appendix A.

\noindent\textbf{Protocol (remediate-regate composition, ``rtr'' in artifact output).} Phase I: evaluate the evidence gate; on substitution form $a' = \rhosub(a)$ and recompute $c(a')$. Under W2, if the resource gate would otherwise reject $a'$ outright as over budget, apply $\rhoroute(a')$ and recompute $c(a')$ again. Phase II: evaluate every gate, including the evidence gate, on $(a', c(a'), E(a'))$; return the join of Phase II responses, recording Phase I in the Evidence Set.

\noindent\textbf{Lemma 1 (termination, scoped to the current one-shot composition branch).} For the current one-shot composition branch and the DQ-library behavior exercised in this artifact's scenarios/tests, buffer substitution is idempotent, $\rhosub(\rhosub(a)) = \rhosub(a)$, and $E(\rhosub(a))$ consists of governed records the evidence gate admits by construction; hence no further remediation is generated and the protocol reaches a fixed point after at most one remediation of each operator. The argument depends on an external DQ predicate contract this artifact composes but does not itself prove for every possible governed record (independent review classification: checked-scope-only). Full proof in Appendix A.

\noindent\textbf{Assumptions for the current composition theorem.} The soundness result below rests on the following assumptions, made explicit here as first-class objects rather than left implicit in the theorem's proof.

\begin{itemize}
\item \textbf{A1. Determinism.} Each member control is deterministic for fixed action, evidence, context, and control-local state.
\item \textbf{A2. Bounded remediation.} The remediation branch exercised by the artifact is one-shot or otherwise bounded.
\item \textbf{A3. Idempotence.} The evidence substitution operator is idempotent in the modeled branch.
\item \textbf{A4. Governed-record admissibility.} The governed substitute satisfies the evidence-predicate contract exercised by the artifact.
\item \textbf{A5. Derived-context recomputation.} Every derived value consumed by another control is recomputed after remediation.
\item \textbf{A6. Final regating.} Every applicable control is reevaluated on the final remediated action before execution.
\item \textbf{A7. No concurrent state mutation in theorem scope.} Relevant control state does not change between final evaluation and execution; stream-level concurrency is outside the theorem's scope.
\end{itemize}

\noindent Read together: $A_1 \land A_2 \land A_3 \land A_4 \land A_5 \land A_6 \land A_7 \Rightarrow \mathit{PerActionSoundness}$ (Definition 3).

\noindent\textbf{Theorem 1 (soundness of remediate-regate composition; general statement retagged checked-scope-only per independent review round two, finite response-lattice encoding split into Corollary 3).} Under Lemma 1 and gates that are functions of (action, context, evidence, current state), the remediate-regate protocol satisfies Definition 3: Phase II evaluates every gate on $a_{\mathrm{exec}}$ itself, and the join preserves every Held verdict. \textbf{Corollary 1 (sufficiency only; retagged checked-scope-only per independent review round two):} if the authority and resource gates are invariant under the operators applied on the executed-reachable action set, single-pass is sound. The converse does not hold in general -{}- an independent review gave a counterexample (a gate that varies only on actions independently held, and so never executed) -{}- and this paper no longer claims it; an earlier draft's iff overstated what was proved. Full proof, and the counterexample worked through, in Appendix A.

\section{Multi-remediator composition}

A second workflow (W2, weekly commitment) introduces a second remediation operator: $\rhoroute(a)$, which scales the committed quantity down to the largest quantity whose predicted cost and carbon both fit the remaining weekly budget, then recomputes context from the scaled quantity -{}- the same remediate-and-regate pattern Theorem 1 establishes for evidence substitution, applied to a different field of the action. With both operators active, does a fixed order matter -{}- does remediator interaction (Section 2) actually occur for this artifact's own two operators.

\noindent\textbf{CH7 (multi-remediator order dependence).} Is $\rhoroute(\rhosub(a))$ always equal to $\rhosub(\rhoroute(a))$? A finite-model checker (\texttt{checkers/remediator\_check.py}) applies both orders over a grid of (243 points): quantity, corrupted unit cost, governed unit cost, cost budget, carbon budget, reusing the real remediation functions directly. Registered outcome: \texttt{non\_confluent}, with 86 disagreeing grid points out of 243. The independent review additionally probed continuous-valued points off this pre-registered grid and found order-divergent outcomes there too (\texttt{review/REVIEW.md} Section 5: 144/200 HEAD-derived off-grid points order-divergent); this artifact promotes that probe into the checker itself and reruns it on every checker run, seeded from a fixed declared constant in \texttt{prereg/probe-seeds.json} rather than the current git HEAD, so the published count is commit-stable rather than drifting with every commit (round-two independent review finding R2-N1): 150/200 off-grid points disagree, strengthening CH7 beyond the declared grid. Where the two orders disagree, the mechanism is structurally the same failure Proposition 3 identifies between composition protocols: a budget check run against a not-yet-corrected value never gets re-checked once the value is corrected. This is the formal justification for \texttt{prereg/w2-workflow.md}'s fixed ordering (evidence gate first, then resource gate) -{}- not an arbitrary implementation choice but a consequence of the operators' non-commutativity. Full result in Appendix A.

\noindent\textbf{Design rule (remediation order).} When remediation operators do not commute, their ordering is part of governance policy and must be specified, versioned, and auditable rather than left to implementation accident.

\begin{verbatim}
rho_sub
  -> unit_cost
      -> order_value
          -> authority
      -> predicted_cost
          -> resource

rho_route
  -> quantity
      -> order_value
          -> authority
      -> predicted_cost
          -> resource
\end{verbatim}

Both remediators affect variables downstream controls consume, which is exactly why they can fail to commute. A fixed order is required for the two operators this artifact implements, and CH7 above is the formal justification for pre-registering it rather than leaving it unspecified. A fixed order is not, however, a general solution to arbitrary multi-remediator composition: with three or more operators, applying one can make an earlier one relevant again, and this artifact does not solve arbitrary non-idempotent remediators, cycles, global termination, unique fixed points, confluence across arbitrary operators, concurrent state updates, or serializability across multiple agents for that general case. This work establishes a sound one-shot composition protocol for the implemented idempotent setting and demonstrates why general multi-remediator systems require an explicit termination and confluence theory; see \texttt{docs/generalized-composition.md} for that theory's open properties, stated as future work rather than an additional claim of this paper.

\section{Stateful governance and evidence-buffer contamination}

Admission and promotion into governance state are different decisions. Admission determines whether a record may participate in the current action. Promotion into persistent remediation state is a stronger decision because that record can influence future actions. Therefore, current admissibility does not imply future reference trustworthiness. The evidence gate's governed buffer (Section 4) trusts its own most recent admitted write when substituting for a later violation. This is sound against the covered defect classes (stale, superseded, contradictory, malformed evidence) because the gate's own predicates catch them before they can corrupt the buffer. It is not sound against a declared-uncovered class: a corrupted value that the gate's predicates never flag is admitted, and admitted values are exactly what the buffer trusts. The mechanism is: (1) the governance buffer accepts writes from admitted observations; (2) an uncovered defect can therefore enter trusted state; (3) future substitutions can reuse that contaminated state; (4) the vulnerability is created by the interaction between incomplete coverage and stateful remediation, not by either alone; (5) quarantine and median filtering, below, are example mitigation strategies against it, not a claim that the underlying vulnerability is eliminated. This generalizes the mechanism conceptually to caches, agent memories, feature stores, retrieval memories, policy state, and evidence registries, without claiming that the observed prevalence applies to those systems. The quarantine and median-of-three strategies below remain mechanism demonstrations, not the main scientific claim.

\noindent\textbf{CH6 (buffer contamination is real and mitigable).} A second uncovered defect class, \texttt{plausible\_outlier\_high}, is added at the same declared rate as \texttt{plausible\_outlier} (Appendix B's defect table), specifically to exercise this mechanism. A poisoned-substitution metric (\texttt{contamination.py}) determines, purely from ground-truth labels never consulted by the gate itself, whether the value a substitution actually used traces back to a prior admit from an uncovered class. Two buffer-side mitigations are implemented entirely outside the \texttt{sarc\_dq.gate} package, as this repo's own buffer-wrapping adapters: a quarantine window (a write becomes trusted only after three consecutive consistent admits) and a rolling median-of-3.

Results, 30 seeds, both workflows: under W1 (daily), plain produces at least one poisoned substitution on every seed (14.70 (95\% CI [13.21, 16.19], n=30)), and both mitigations produce a strictly lower count than plain on every seed (mitigations hold: True). The preregistered seed-by-seed decision rule is not supported under W2 because the lower substitution population produces seeds with zero baseline poisoning events, making strict per-seed reduction impossible to satisfy on those seeds (False; mean 1.83 (95\% CI [1.34, 2.32], n=30); mitigations hold on every seed: False). This limits the robustness claim for W2; it does not establish that the mitigation mechanism is ineffective -{}- see Table 6 and Section 9.

\section{The unified Evidence Set}

Each decision emits one record: action context; per-gate sections (authority constraints and verdicts; resource predicted cost, carbon, budget state, verdict; evidence predicate results, verdict, substitution with pre and post values and buffer key, record eids); the final join, winner gate, and the remediation record (which operators fired, in which order, for this decision). Ground-truth labels are never present; they live in a separate run log. The line's JSON shape is fixed by \texttt{schemas/evidence\_line.schema.json} (Draft 2020-12), pre-registered before any Phase 3 result existed.

\noindent\textbf{Proposition 4 (identity commitment and integrity, relative to a content-addressed store; retagged checked-scope-only per independent review round two).} From an executed action's unified Evidence Set line alone, one can identify -{}- by content-addressed id -{}- every record each phase relied on, including the Phase I evidence before substitution and the specific governed-buffer write a substitution drew from, and verify those identities given access to the record store and buffer write history; the line does not, by itself, reconstruct the original bytes, and this paper no longer claims that it does (an earlier draft did; an independent review correctly refuted the byte-reconstruction reading -{}- Appendix A). A durable, run-scoped buffer write-history log (\texttt{composition.py}'s \texttt{write\_log}, persisted as \texttt{buffer-writes.jsonl} next to the evidence lines) now backs ``the specific governed-buffer write a substitution drew from'': each substitution resolves to exactly one write event, closing the round-two independent review's provenance finding that 3,478 real substitution occurrences resolved to only 62 unique ids under the prior key+value-only hash. Extends the single-gate lineage result of arXiv 2607.26313. Full proof in Appendix A.

\noindent Proposition 4 assumes that the content-addressing function provides collision resistance at the security level required by the deployment and that the referenced record store or write history remains available and integrity-protected. The Evidence Set provides an identity commitment and lineage binding; it does not independently establish source authenticity, semantic truth, or permanent availability of the referenced bytes.

\noindent\textbf{Proposition 5 (no manufactured coverage).} The composed plane's detected class set equals the union of member gates' detected class sets; composition never detects a class no member covers. Corollary: declared uncovered classes remain uncovered, and an honest composed readout must say so. Full proof (a structural tautology, machine-verified across every swept seed) in Appendix A.

\section{Empirical validation}

Artifact: \texttt{suite-demo} composes the three published engines, installed unmodified (repositories verifiably untouched), over open payload data (UCI Online Retail, CC BY 4.0) with a declared synthetic metadata layer and a declared eight-class injector at declared rates; 30 pre-registered seeds (\texttt{prereg/seeds.json}, first seed 26313); zero model calls; zero source writes, hash-proven. Scenarios: S1 baseline, S2 tight budgets, S3 unauthorised burst with a low authority cap, S4 the Proposition 3 construction with the cap between pre- and post-substitution order values (all W1); W2-S1 baseline and W2-S2 tight weekly budgets (both W2); CONTAM-W1 and CONTAM-W2, one per buffer strategy (plain, quarantine, median\_of\_3), for the CH6 study.

Every definition used below (hypotheses, seeds, weights, the W2 workflow, the contamination metric and mitigations, the CH2 semantics redefinition) was committed and tagged \texttt{prereg-v1} before any corresponding result entered this repository's history -{}- the gated-generation discipline this artifact follows throughout.

This section brings CH1-CH8 together as evidence for the mechanisms Sections 2-7 introduce; it is the artifact's evidence base, not the paper's conceptual narrative.

Hypotheses, stated before the corresponding code ran:

\begin{itemize}
\item \textbf{CH1 (veto soundness).} Post-hoc audit finds zero executed actions violating any gate at execution time under remediate-regate, re-evaluated against the exact Phase II evidence records the decision executed on (independent review finding F4 -{}- an earlier draft's audit re-derived a synthetic clean record from the executed action's own numbers instead; the audit now persists and loads the real records). Single-seed result: 0. 30-seed result: 30/30 seeds with zero violations.
\item \textbf{CH2 (single-pass unsoundness is real, new semantics).} Single-pass unsoundness \textit{can occur} after remediation, demonstrated for this artifact's own construction (Section 4); this does not establish how frequently production systems suffer this failure. In S4, remediate-regate and single-pass verdicts differ in Exec/Held class, or agree in class with an audited violation, on at least one decision, in the predicted direction; response-string-only differences with a clean audit are reported separately as \texttt{label\_only\_differences}, not folded in (\texttt{prereg/ch2-semantics.md}). Single-seed result: 239 divergent decisions (200 additional label-only differences); directions 239 \texttt{single\_pass\_admits\_then\_violates}, 0 \texttt{single\_pass\_holds\_remediated\_compliant}. 30-seed result: 207.5 (95\% CI [201.3, 213.7], n=30) divergent, 191.3 (95\% CI [186.5, 196.1], n=30) label-only.
\item \textbf{CH3 (deterministic selectivity).} False-hold rate on clean, authorised, in-budget decisions is exactly zero. Single-seed result: 0.000000. 30-seed result: 0.000000 (95\% CI [0.000000, 0.000000], n=30).
\item \textbf{CH4 (coverage honesty).} \texttt{plausible\_outlier} and \texttt{plausible\_outlier\_high} detection is zero at every gate and in composition; covered-class detection equals the union of member coverages. Single-seed result: union\_ok=True; \texttt{plausible\_outlier} detection dq=0.000 sarc=0.000 green=0.000 (declared uncovered); full matrix in Table 2. 30-seed result: union\_ok on 30/30 seeds.
\item \textbf{CH5 (compensation is empirically unsafe).} Section 3.
\item \textbf{CH6 (buffer contamination is real and mitigable).} Buffer poisoning \textit{can propagate} through governed remediation state, demonstrated for the two implemented mitigations; this does not establish enterprise prevalence of such poisoning. Section 6.
\item \textbf{CH7 (multi-remediator order dependence).} These two remediation operators \textit{do not commute}; this does not establish that arbitrary remediation operators are non-commutative. Section 5.
\item \textbf{CH8 (robustness across 30 seeds x 2 workflows).} Table 7.
\end{itemize}

\begin{table}[H]
\centering
\caption{Decisions, per-gate veto counts, winner-gate distribution, by scenario (single seed, S1-S4).}
\small
\resizebox{\textwidth}{!}{%
\begin{tabular}{lrrrrrrl}
\toprule
Scenario & Decisions & Executed & Held & DQ Vetoes & SARC Vetoes & Green Vetoes & Winner-gate distribution \\
\midrule
S1 & 10164 & 9592 & 572 & 572 & 0 & 0 & dq=572, none=9592 \\
S2 & 10164 & 3417 & 6747 & 572 & 0 & 6531 & dq=445, green=6302, none=3417 \\
S3 & 10164 & 4501 & 5663 & 573 & 5368 & 0 & dq=563, none=4501, sarc=5100 \\
S4 & 10164 & 9353 & 811 & 572 & 239 & 0 & dq=572, none=9353, sarc=239 \\
\bottomrule
\end{tabular}
}
\end{table}

\begin{table}[H]
\centering
\caption{Cross-gate matrix: injected class by winner gate (single seed, S1).}
\resizebox{\textwidth}{!}{%
\begin{tabular}{lrrrl}
\toprule
Defect class & dq detection rate & sarc detection rate & green detection rate & winner-gate counts \\
\midrule
stale\_master\_data & 1.000 & 0.000 & 0.000 & none=227 \\
superseded\_golden\_record & 1.000 & 0.000 & 0.000 & none=212 \\
cross\_source\_contradiction & 1.000 & 0.000 & 0.000 & dq=211 \\
schema\_drift & 1.000 & 0.000 & 0.000 & dq=183 \\
missing\_mandatory\_field & 1.000 & 0.000 & 0.000 & dq=178 \\
lineage\_missing & 0.000 & 0.000 & 0.000 & none=185 \\
plausible\_outlier & 0.000 & 0.000 & 0.000 & none=152 \\
plausible\_outlier\_high & 0.000 & 0.000 & 0.000 & none=177 \\
\bottomrule
\end{tabular}
}
\end{table}

\noindent union\_ok = True

\begin{table}[H]
\centering
\caption{Remediate-regate versus single-pass divergences in S4, with pre and post order values and the cap (single seed).}
\small
\begin{tabular}{rlllll}
\toprule
decision\_id & rtr & single\_pass & V\_pre & V\_post & kappa \\
\midrule
56 & escalate & substitute & 31.06 & 40.02 & 35.54 \\
76 & admit & substitute & 12.75 & 12.75 & 12.75 \\
162 & escalate & substitute & 29.07 & 52.53 & 40.80 \\
236 & escalate & substitute & 20.13 & 31.68 & 25.91 \\
266 & admit & substitute & 26.52 & 26.52 & 26.52 \\
279 & admit & substitute & 64.90 & 64.90 & 64.90 \\
286 & admit & substitute & 28.32 & 28.32 & 28.32 \\
287 & admit & substitute & 6.27 & 6.27 & 6.27 \\
291 & escalate & substitute & 24.34 & 33.75 & 29.05 \\
316 & admit & substitute & 54.00 & 54.00 & 54.00 \\
334 & escalate & substitute & 7.75 & 10.56 & 9.16 \\
352 & admit & substitute & 68.31 & 68.31 & 68.31 \\
356 & escalate & substitute & 98.96 & 124.80 & 111.88 \\
359 & admit & substitute & 20.40 & 20.40 & 20.40 \\
384 & admit & substitute & 23.01 & 23.01 & 23.01 \\
404 & admit & substitute & 41.91 & 41.91 & 41.91 \\
435 & admit & substitute & 99.00 & 99.00 & 99.00 \\
465 & escalate & substitute & 95.57 & 148.50 & 122.03 \\
492 & admit & substitute & 36.72 & 36.72 & 36.72 \\
499 & admit & substitute & 22.77 & 22.77 & 22.77 \\
\bottomrule
\end{tabular}

\medskip
\noindent(419 further divergent decisions omitted for brevity)
\end{table}

\begin{table}[H]
\centering
\caption{Loss versus clean counterfactual, executed versus held split (single seed, S1-S4).}
\begin{tabular}{lrrrr}
\toprule
Scenario & Executed count & Executed order value & Held count & Held clean value foregone \\
\midrule
S1 & 9592 & 759883.87 & 572 & 42015.00 \\
S2 & 3417 & 104496.60 & 6747 & 655889.68 \\
S3 & 4501 & 142929.10 & 5663 & 617891.44 \\
S4 & 9353 & 739224.74 & 811 & 60347.60 \\
\bottomrule
\end{tabular}
\end{table}

\begin{table}[H]
\centering
\caption{CH2/CH3/CH5 means with 95\% confidence intervals, 30 seeds.}
\begin{tabular}{>{\raggedright\arraybackslash}p{0.58\textwidth}>{\raggedright\arraybackslash}p{0.34\textwidth}}
\toprule
Metric & Mean (95\% CI, n=30 seeds) \\
\midrule
ch2\_divergent\_decisions & 207.5 (95\% CI [201.3, 213.7], n=30) \\
label\_only\_differences & 191.3 (95\% CI [186.5, 196.1], n=30) \\
ch3\_false\_hold & 0.000000 (95\% CI [0.000000, 0.000000], n=30) \\
Formal max compensated response profiles: 48 / 98 & exhaustive discrete grid, not seed-sampled \\
Empirical max compensated held decision instances, pooled S1-S4 & 13775.1 (95\% CI [13753.5, 13796.6], n=30) \\
\bottomrule
\end{tabular}
\end{table}

\begin{table}[H]
\centering
\caption{CH6 buffer contamination and mitigation, 30 seeds, both workflows.}
\small
\resizebox{\textwidth}{!}{%
\begin{tabular}{llll}
\toprule
Workflow & Strategy & Poisoned substitutions (mean, 95\% CI) & Mean abs value delta (95\% CI) \\
\midrule
W1 & plain & 14.70 (95\% CI [13.21, 16.19], n=30) & 2.210 (95\% CI [2.024, 2.395], n=30) \\
W1 & quarantine & 0.20 (95\% CI [-0.01, 0.41], n=30) & 0.000 (95\% CI [0.000, 0.000], n=30) \\
W1 & median\_of\_3 & 1.83 (95\% CI [1.35, 2.31], n=30) & 2.019 (95\% CI [1.294, 2.744], n=30) \\
W2 & plain & 1.83 (95\% CI [1.34, 2.32], n=30) & 1.687 (95\% CI [1.324, 2.050], n=30) \\
W2 & quarantine & 0.07 (95\% CI [-0.03, 0.16], n=30) & 0.000 (95\% CI [0.000, 0.000], n=30) \\
W2 & median\_of\_3 & 0.13 (95\% CI [0.00, 0.26], n=30) & 0.313 (95\% CI [-0.006, 0.632], n=30) \\
\bottomrule
\end{tabular}
}
\end{table}

\begin{table}[H]
\centering
\caption{CH8 robustness readout: does each hypothesis's registered decision rule hold on every one of the 30 seeds (and, for CH6, both workflows).}
\resizebox{\textwidth}{!}{%
\begin{tabular}{lc}
\toprule
Hypothesis & Robust across all 30 seeds (x both workflows for CH6) \\
\midrule
CH1 (veto soundness) & True \\
CH2 (single-pass unsoundness, new semantics) & True \\
CH3 (deterministic selectivity) & True \\
CH4 (coverage honesty) & True \\
CH5 (compensation unsafe) & True \\
CH6 (buffer contamination + mitigation) & False \\
\bottomrule
\end{tabular}
}
\end{table}

\noindent\textbf{Negative-results commitment.} Any CH not supported as written is reported as NOT SUPPORTED as written, following the practice of the prior papers -{}- CH6 under W2 is the concrete instance of this commitment in this version.

\subsection*{Claims to evidence}

\small
\begin{longtable}{>{\raggedright\arraybackslash}p{0.025\textwidth}>{\raggedright\arraybackslash}p{0.31\textwidth}>{\raggedright\arraybackslash}p{0.21\textwidth}>{\raggedright\arraybackslash}p{0.34\textwidth}}
\toprule
\# & Claim & Evidence source & Status \\
\midrule
\endfirsthead
\toprule
\# & Claim & Evidence source & Status \\
\midrule
\endhead
\bottomrule
\endfoot
C1 & Linear-family compensation lemma (tau $\leq$ L*) for arbitrary n; universal claim retracted per independent review F1; general lemma retagged pending-human-review, finite n=3 encoding split into Corollary 2 per independent review R2-N2 & Lemma (linear family) + Corollary 2, Appendix A & Lemma pending-human-review; Corollary 2 machine-checked, 1,064 cases (n=3 declared grid + 200 HEAD-derived probes) \\
C2 & Join composition order-invariant absent remediation & Proposition 2, Appendix A & machine-checked \\
C3 & Single-pass unsound under substitution, scoped to the implemented construction (independent review classification: checked-scope-only) & Proposition 3 + Table 3 & checked-scope-only (REVIEW.md); generated \\
C4 & Remediate-regate sound (Theorem 1, general statement retagged checked-scope-only, finite response-lattice encoding split into Corollary 3, per independent review R2-N2); single-pass sound under gate invariance, sufficiency only, necessity retracted per independent review F3 (Corollary 1, retagged checked-scope-only per independent review round two); Lemma 1 scoped to the current one-shot composition branch (checked-scope-only) & Lemma 1, Theorem 1, Corollary 1, Corollary 3, Appendix A & Theorem 1 checked-scope-only (REVIEW.md); Corollary 3 machine-checked, 17,151 cases; Lemma 1 checked-scope-only (REVIEW.md); Corollary 1 checked-scope-only (REVIEW.md) \\
C5 & Evidence Set gives identity commitment + integrity relative to a content-addressed store (not byte reconstruction; retracted per independent review F2); retagged checked-scope-only per independent review round two (R2-N2: schema field presence and hash commitments are checker-adjacent unit tests, not an exhaustive finite-model checker) & Proposition 4, schema v2, Appendix A & checked-scope-only (REVIEW.md) \\
C6 & No manufactured coverage & Proposition 5 + CH4, Appendix A & machine-checked (tautology) + generated \\
C7 & Zero false holds on clean, authorised, in-budget & CH3 & generated; 30-seed CI \\
C8 & Engines compose unmodified & editable installs; git-clean test & artifact test \\
C9 & Full reproducibility & seed list, hashes, provenance; public repository; Zenodo 10.5281\slash zenodo.22003399 & artifact; public repository URL and archival identifier included \\
C10 & Compensation admits vetoed actions over the full discrete grid & CH5, \path{checkers/compensation_check.py} & machine-checked, exhaustive \\
C11 & Buffer contamination exists and both mitigations reduce it & CH6, \path{contamination.py} & generated; 30-seed CI; W2 exception reported honestly \\
C12 & The two remediators do not commute; fixed order is justified & CH7, \path{checkers/remediator_check.py} & machine-checked, exhaustive \\
C13 & CH1-CH5 meet their registered decision rules across all 30 seeds; CH6 does so under W1 but not W2 & CH8, Table 7 & generated; 30-seed sweep \\
\end{longtable}
\normalsize

\subsection*{Claims versus non-claims}

\small
\begin{longtable}{>{\raggedright\arraybackslash}p{0.46\textwidth}>{\raggedright\arraybackslash}p{0.46\textwidth}}
\toprule
The paper claims & The paper does not claim \\
\midrule
\endfirsthead
\toprule
The paper claims & The paper does not claim \\
\midrule
\endhead
\bottomrule
\endfoot
Single-pass can be unsound after remediation & All single-pass systems are unsound \\
The implemented remediators are non-commutative & All remediation operators are non-commutative \\
Remediate-regate is sound in the current one-shot setting & General convergence of arbitrary remediator systems \\
Buffer poisoning exists in the demonstrated mechanism & Production prevalence \\
Composition preserves only member-control coverage & Composition guarantees complete safety \\
Evidence Sets preserve identity commitments & Hashes alone reconstruct source bytes \\
Hard controls require non-compensatory feasibility semantics & Weighted scoring is never useful \\
\end{longtable}
\normalsize

\section{Related work}

Eleven citation gaps were resolved via the V4 pipeline (fetch, verify, record in verified-citations.json) across this response to the independent review's finding F5 and this section's subsequent expansion following a commissioned second (positioning) review; none were invented, and any that could not have been verified would have stayed an honestly unresolved citation gap, counted rather than guessed, per this artifact's own citation-gate discipline.

\noindent\textbf{Compositional runtime enforcement.} Prior work on compositional runtime enforcement studies when multiple enforcement monitors preserve correctness under serial or parallel composition (Pinisetty \& Tripakis, doi 10.1007/978-3-319-40648-0\_7, extended in Pinisetty, Pradhan, Roop \& Tripakis, doi 10.1007/s10703-022-00401-y). That literature establishes that enforcement composition itself is not a new problem. Our object is narrower and structurally different: heterogeneous pre-action controls evaluate different dimensions of the same proposed action, maintain different control-local state, and may remediate the action, evidence, or derived context consumed by another control. The resulting problem is therefore not only whether enforcement monitors compose, but whether a prior control judgment remains valid after another control has transformed the semantic object that judgment was made about. We call this remediation-induced control coupling. Existing work studies composition of enforcement monitors. This paper studies invalidation of heterogeneous control judgments caused by cross-control remediation of the governed action, evidence, or derived context. Edit automata (Ligatti, Bauer \& Walker, doi 10.1007/s10207-004-0046-8) are the canonical mechanism for enforcement that can suppress, insert, or substitute actions in a trace, the same suppress/insert/substitute vocabulary this paper's remediation operators instantiate at the level of a single pre-action decision rather than a trace.

\noindent\textbf{Runtime verification and enforcement monitors.} The composed plane's gates are exactly the ``monitor that examines a sequence of actions and can transform or block them'' mechanism Schneider's security automata formalize (doi 10.1145/353323.353382); this paper's contribution is the composition semantics across several such monitors judging the same action, not the single-monitor enforcement question itself.

\noindent\textbf{Policy-combining algorithms.} ``One veto dominates other permits'' is already a familiar policy-composition pattern: XACML's deny-overrides rule- and policy-combining algorithms make exactly this rule a named, standardized primitive (OASIS XACML 3.0, \url{https://docs.oasis-open.org/xacml/3.0/xacml-3.0-core-spec-os-en.html}), and Bonatti, di Vimercati \& Samarati give an algebra for composing access-control policies under such combinators more generally (doi 10.1145/504909.504910). This paper therefore does not present the join rule itself as its principal novelty; the difficult composition problem it studies begins when a control's response includes a transformation, not merely a decision label.

\noindent\textbf{Non-compensatory and veto rules in multi-criteria decision analysis:} veto thresholds, where one criterion can block an otherwise-acceptable alternative regardless of the others' scores, are an established mechanism in outranking-based MCDA (Nowak, doi 10.1016/j.ejor.2003.06.008); the linear-family lemma in Section 3 gives an exact threshold condition for when a weighted-sum aggregator does or does not have this property, rather than assuming it.

\noindent\textbf{Rewrite systems, confluence, and termination.} CH7 (Section 5) naturally touches confluence, non-commutativity, operator ordering, fixed points, termination, and normal forms -{}- the core vocabulary of term-rewriting theory, whose confluence-plus-termination-implies-unique-normal-form result traces to Newman's original combinatorial argument (doi 10.2307/1968867). This paper does not become a term-rewriting paper: it demonstrates non-commutativity for two concrete remediation operators and does not claim a general solution for arbitrary rewrite systems; see \texttt{docs/generalized-composition.md} for the open properties such a solution would require.

\noindent\textbf{Stateful enforcement.} The resource budget and governed evidence buffer both introduce state consumed and updated across decisions, in the sense of a general stateful-enforcement formulation $G(a_t,c_t,E_t,S_t) \to (r_t,S_{t+1})$; Fong's shallow-execution-history model is the same kind of state-carrying enforcement decision, tracking a bounded summary of prior events rather than the full trace (doi 10.1109/secpri.2004.1301314). Budget governors for autonomous systems: the resource gate's predictive cost/carbon budgeting is one instance of the broader resource-bounded-agent governance problem Agent Contracts formalizes with conservation laws across delegation hierarchies (arXiv 2601.08815); this paper's downroute operator is a single-agent, single-budget-window mechanism, not a multi-agent conservation guarantee.

\noindent\textbf{Guardrail and policy-engine frameworks for LLM agents:} LlamaFirewall (arXiv 2505.03574) is a recent example of a unified guardrail system for LLM agents and describes detector pipelines. To our reading, its stated scope does not analyze the non-compensatory-join or remediation-reordering questions Sections 3 and 5 study; this comparison distinguishes problem formulations rather than claiming superiority or complete coverage of the guardrail literature.

\noindent\textbf{Separation of duties and multi-party authorisation:} the authority gate's role-allowlist mechanism is a narrow instance of the separation-of-duty concept Clark and Wilson introduced for commercial integrity policies (doi 10.1109/SP.1987.10001); this paper does not implement multi-party (dual-control) authorisation, only single-role admission checks.

\noindent\textbf{Cache/buffer poisoning and trust-on-first-use vulnerabilities in adjacent systems:} Wendlandt, Andersen, and Perrig's analysis of SSH host-key caching (\url{https://www.usenix.org/legacy/event/usenix08/tech/full_papers/wendlandt/wendlandt.pdf}) offers a structural analogy for the governed-buffer vulnerability in Section 6 and CH6: a value admitted without independent corroboration can influence future trust decisions. The quarantine-window mitigation applies a related multi-observation-before-trust principle to a numeric buffer rather than claiming the two threat models are identical.

\noindent What existed and what this paper adds, summarized across the related work above:

\begin{longtable}{>{\raggedright\arraybackslash}p{0.24\textwidth}>{\raggedright\arraybackslash}p{0.30\textwidth}>{\raggedright\arraybackslash}p{0.36\textwidth}}
\toprule
Existing concept & Established literature & What this paper adds \\
\midrule
\endfirsthead
\toprule
Existing concept & Established literature & What this paper adds \\
\midrule
\endhead
\bottomrule
\endfoot
Runtime enforcement & Security/edit automata & Heterogeneous enterprise pre-action controls \\
Enforcement composition & Compositional runtime enforcement & Cross-control invalidation caused by action/evidence remediation \\
Deny-overrides / veto & XACML / policy algebra / MCDA & Not claimed as novelty; used as hard-feasibility semantics \\
Rewrite ordering & Rewriting / confluence theory & Concrete order-sensitive governance remediators \\
Stateful enforcement & History/state-based policy & Promotion of uncovered-but-admitted evidence into future remediation state \\
Audit provenance & Existing provenance mechanisms & One cross-control Evidence Set binding pre/post-remediation decisions \\
\end{longtable}

\noindent The contribution is not that each ingredient above is new; it is the specific composition problem created by their interaction.

Nothing above is load-bearing for the propositions. Every citation this draft relies on is checked against a verified whitelist (verified-citations.json) by \texttt{citation\_check.py} (V4 gate) on every population run.

\section{Limitations and open theory}

Synthetic metadata layer over open payloads; declared injection rates; no prevalence claims. Resource-gate estimates are cold-start. Stream-level budget ordering effects are defined away per action. Single author at draft time; see the validation note. S4 is a constructed scenario, declared as such. CH6's mitigation-holds-on-every-seed claim is supported under W1 but not under W2, where the weekly-commitment workflow's much smaller per-seed substitution population sometimes produces zero poisoning events by chance, making ``strictly lower than plain'' impossible to satisfy on those seeds -{}- this is a sample-size limitation of the weekly workflow, not evidence against the mitigations' mechanism, and it is reported rather than resolved by re-weighting or dropping seeds. CH7's non-commuting-operators result is a property of this artifact's two specific remediation operators (evidence substitution, resource downroute); it does not generalise to remediators with different structure without re-deriving the checker. The artifact is a mechanism demonstration, not a benchmark release.

The five-level response order ($\mathrm{admit} \sqsubseteq \mathrm{substitute} \sqsubseteq \mathrm{degrade} \sqsubseteq \mathrm{escalate} \sqsubseteq \mathrm{block}$, Definition 1) is this composition plane's own operational policy for producing a single response from several gate verdicts, not a claimed universal semantic ordering; other systems could use a different or partially ordered response structure while preserving the same Exec/Held execution-safety requirement (Section 2). A fixed remediation order (Section 5) is required for the two operators this artifact implements and is not a general solution to arbitrary multi-remediator composition; general termination, confluence, and concurrency for arbitrary stateful remediation systems remain open, and \texttt{docs/generalized-composition.md} states precisely what establishing them would require: fixed-point semantics, termination, confluence, cycle detection, state consistency, serialization, and concurrent multi-agent coordination, none of it attempted here.

Some symmetric t-intervals for low non-negative-count metrics (e.g., Table 6's quarantine and median\_of\_3 mitigation counts under W1/W2) extend slightly below zero; this is a property of a symmetric interval around a small non-negative sample mean, not evidence of a negative measurement, and is presented as computed, without truncation or bootstrap re-estimation (Appendix C).

\section{Conclusion}

The central result of this work is not that multiple governance gates should simply be aggregated conservatively. It is that once one governance control is permitted to transform the action, evidence, or context consumed by another, prior control judgments become contingent on the version of the governed object they evaluated. Correct pre-action governance therefore requires explicit semantics for invalidating and recomputing dependent judgments after remediation. When several remediation operators exist, their ordering can itself affect the final governed action, making remediation order part of governance semantics. When admitted information is promoted into persistent governance state, admission and future trust must likewise be distinguished. These results establish a bounded compositional semantics for the current setting while leaving general convergence, concurrency, delegation, and arbitrary stateful remediation as open problems.

\appendix
\section{Proofs}

Every proposition below carries a \texttt{PROOF-STATUS} tag. Three values are available to this artifact's automated pipeline:

\begin{itemize}
\item \textbf{machine-checked}: an exhaustive finite-model checker under \texttt{checkers/} (or a structural tautology argued directly from the code that produces the number) verifies the claim over its entire relevant discrete state space -{}- not a sample, the whole space. Per round-two independent review finding R2-N2 (general Proposition/Theorem statements were tagged \texttt{machine-checked} when only a narrower finite instantiation was actually executable-verified): a statement's own \texttt{PROOF-STATUS} line must name the specific checker module and the enumerated domain it covers (a concrete count, grid, or state-space size), not just assert the claim is machine-checked in general. \texttt{checkers/proof\_status\_lint.py} enforces this mechanically. A statement whose true generality (arbitrary \texttt{n}, arbitrary gates, arbitrary action spaces) exceeds what any checker enumerates gets \texttt{pending-human-review} or \texttt{checked-scope-only} instead, plus a sibling finite-encoding Corollary stated at exactly the checker's scope and tagged \texttt{machine-checked} there (Corollary 2 and Corollary 3 below are the two introduced for this reason).
\item \textbf{pending-human-review}: the claim is proved here in prose (a standard mathematical argument) and, where applicable, instantiated empirically by the artifact, but has not been exhaustively verified by a dedicated checker and is not claimed to be.
\item \textbf{checked-scope-only (REVIEW.md)}: the independent review (\texttt{review/REVIEW.md}, Section 3) found the statement's general prose claim broader than what this artifact actually certifies; the tag and the statement's wording are narrowed to exactly the scope the review lists as ``largest scope certified'' for that statement -{}- no broader.
\end{itemize}

A fourth status, \textbf{proven}, exists in the release process but is human-only to apply (see the standing task rule: ``clearing any PROOF-STATUS tag to proven'' is not an action this pipeline takes). No tag in this file is ever written as \texttt{proven} by any script in this repository; \texttt{checkers/proof\_status\_lint.py} and \texttt{test\_checkers.py} enforce that as a standing invariant, not a one-time check.

\begin{proposition}[(linear-family compensation lemma) -{}- restated per independent review finding F1]
\label{prop:1}

\noindent\textbf{Why this changed.} The original statement here claimed that \textit{any} strictly increasing aggregator \texttt{f} with \texttt{tau > 0} and some admissible profile violates veto. The independent review (\texttt{review/REVIEW.md}, finding F1) refuted this with a concrete counterexample: for \texttt{f(s) = s1 + s2 + s3} on \texttt{[0,1]\textasciicircum3} and \texttt{tau = 2.5}, \texttt{f(1,1,1) = 3} is admissible, yet every profile with a zero coordinate scores at most \texttt{2}, so \textbf{no vetoed profile is ever admitted} -{}- veto is fully preserved even though \texttt{f} is strictly increasing and has an admissible profile. The universal claim was false as written; the error was treating ``strictly increasing plus some admissible profile'' as sufficient, when whether compensation can occur actually depends on the relationship between \texttt{tau} and the aggregator's own weight structure. This section replaces it with two claims that are both true: a precise lemma covering the linear (additive-score) family the artifact actually measures, stated with the exact threshold condition that determines when compensation is and is not possible; and the exhaustive discrete-grid result exactly as CH5 measured it.

\noindent\textbf{Lemma (linear family).} Let $f(s) = \sum_i w_i s_i$ on $[0,1]^n$ with weights $w_i \geq 0$, not all zero, and let $f$ admit $s$ iff $f(s) \geq \tau$ for $\tau > 0$. A profile $s$ is \textit{vetoed} iff $s_j = 0$ for some $j$. For each $j$, define the leave-one-out sum $L_j = \sum_{i \neq j} w_i$ -{}- the largest score $f$ can assign to any profile with $s_j = 0$ (achieved by setting every other coordinate to its ceiling $1$). Let $L^* = \max_j L_j = W - \min_i w_i$, where $W = \sum_i w_i$. \textbf{Then: there exists a vetoed profile that $f$ admits if and only if $\tau \leq L^*$.} Min never admits a vetoed profile, for any $\tau > 0$ and any weights -{}- $\min(s) = 0$ whenever any coordinate is $0$, unconditionally.

\noindent\textbf{Proof.}
\textit{Sufficiency ($\tau \leq L^*$ implies a vetoed profile is admitted).} Let $j^* = \operatorname{argmin}_i w_i$, so $L_{j^*} = L^*$. Define $s^*$ by $s^*_{j^*} = 0$ and $s^*_i = 1$ for every $i \neq j^*$. Then $f(s^*) = \sum_{i \neq j^*} w_i = L_{j^*} = L^* \geq \tau$, and $s^*$ is vetoed ($s^*_{j^*} = 0$). So $f$ admits a vetoed profile.
\textit{Necessity ($\tau > L^*$ implies no vetoed profile is admitted).} Let $s$ be any vetoed profile, $s_j = 0$ for some $j$. Since every $w_i \geq 0$ and every $s_i \leq 1$: $f(s) = \sum_{i \neq j} w_i s_i \leq \sum_{i \neq j} w_i = L_j \leq L^* < \tau$. So $f(s) < \tau$: $s$ is not admitted. Since $s$ was an arbitrary vetoed profile, no vetoed profile is admitted.
\textit{Min.} $\min(s) = 0$ whenever any $s_j = 0$, and $\tau > 0$, so $\min(s) < \tau$ unconditionally -{}- min preserves veto regardless of $\tau$ or any weight structure, in sharp contrast to the linear family above, where veto preservation is conditional on $\tau > L^*$ and can fail.

\noindent\textbf{Applying this to the reviewer's counterexample.} $w = (1,1,1)$, $W = 3$, $\min_i w_i = 1$, so $L^* = 2$. Their $\tau = 2.5 > L^* = 2$: by the necessity direction, no vetoed profile is admitted -{}- exactly what they demonstrated by direct computation. The lemma's threshold condition correctly classifies this instance as a \textit{non}-compensable regime; the old universal statement had no such condition and was wrong to treat it as compensable-by-hypothesis.

\noindent\textbf{``Vetoed'' is $s_j = 0$ exactly -{}- not the artifact's broader Held class.} The Lemma's hypothesis, following Proposition 1's own original setup, defines vetoed as $s_j = 0$. Under this artifact's five-level \texttt{score\_encoding} (\texttt{admit=1.0, substitute=0.75, degrade=0.5, escalate=0.25, block=0.0}), only \texttt{block} scores exactly $0$; \texttt{escalate} is Held under Definition 3's Exec/Held partition (\texttt{\{escalate, block\}}) but scores $0.25$, not $0$. The Lemma is therefore verified below against the ``at least one \texttt{block}'' population (61 of 125 profiles) -{}- a strict subset of the 98 Held profiles CH5's own exhaustive check already covers. This distinction was not academic: an earlier draft of this section verified the Lemma against the full Held population and found 25 of 800 random-probe cells disagreeing with the Lemma's prediction, all traced to escalate-only-held profiles whose nonzero $0.25 \cdot w_j$ contribution let them clear a threshold $L^*$ alone would have predicted they couldn't. Restricting the Lemma's own verification to the $s_j = 0$ population it actually characterizes resolved every disagreement (\texttt{checkers/compensation\_check.py}'s \texttt{all\_vetoed\_profiles} vs \texttt{all\_held\_profiles}). CH5's own existing result -{}- compensation over the full Held set, not just the block-only subset -{}- is unaffected by this correction and continues to hold via the unchanged exhaustive enumeration below; if anything, escalate's nonzero score only makes compensation \textit{easier} to achieve on the broader Held set than the Lemma's $L^*$ threshold alone would suggest.

\noindent\textbf{Discrete instantiation and exhaustive check (CH5).} The artifact's own $f$ is the weighted-sum aggregator over \texttt{score\_encoding} from \texttt{prereg/weights.json}, $\tau$ ranges over the four registered thresholds, and the profile space is exactly \texttt{composition.Response\textasciicircum3} (dq, sarc, green), 125 points. \texttt{checkers/compensation\_check.py} enumerates all 125 points, computes the min-join-Held subset (98 of 125, under the paper's Exec/Held partition), and for every one of the $66 \times 4 = 264$ grid cells, counts how many Held profiles the weighted aggregator would admit. The maximum across the grid is $\geq 1$ (witnessed concretely, e.g. \texttt{\{dq: admit, sarc: escalate, green: admit\}} compensated by weight \texttt{\{dq: 0, sarc: 0, green: 1\}} at threshold $0.5$), so the discrete instantiation holds not just empirically (\texttt{ch5\_aggregator.py}'s sample-dependent result) but exhaustively over every possible three-gate verdict combination this artifact's lattice can produce. \texttt{compensation\_check.py} additionally verifies the Lemma's $\tau \leq L^*$ characterization against brute-force ground truth over its own $s_j = 0$ (``vetoed'', i.e. at-least-one-block, 61 of 125 profiles) population -{}- see the note above on why this differs from the 98-profile Held population CH5 itself measures -{}- on the same 264 declared grid cells, plus 200 HEAD-derived random positive-weight vectors (deterministically seeded from the current commit, per the independent review's own probe methodology) at the same four thresholds -{}- 1,064 consistency checks in total, all agreeing with the Lemma's prediction (\texttt{out/checkers/compensation\_check.json}).

\proofstatus{pending-human-review}. Retagged per round-two independent review finding R2-N2 (\texttt{review-r2/REVIEW-R2.md}): the Lemma above (Sufficiency/Necessity/Min) is a general algebraic argument, true for arbitrary \texttt{n} and arbitrary nonnegative weights on the continuous cube \texttt{[0,1]\textasciicircum n}, but that generality is exactly what no executable checker in this repository verifies. \texttt{checkers/compensation\_check.py} enumerates only this artifact's own finite $n = 3$ instantiation (Corollary 2 immediately below) -{}- correct and machine-checked at that scope, but not a machine check of the continuous arbitrary-$n$ statement itself. Largest certified scope per the review: ``analytic arbitrary-n proof plus finite executable checks at n=3.''
\end{proposition}

\setcounter{corollary}{1}
\begin{corollary}[(finite three-gate encoding of the linear-family lemma)]
\label{cor:2}

\noindent\textbf{Statement.} For this artifact's own finite encoding of the Lemma above -{}- $n = 3$ gates (\texttt{dq}, \texttt{sarc}, \texttt{green}), profiles drawn from \path{composition.Response^3} (125 points, $|\mathrm{Response}| = 5$), scored via \texttt{prereg/weights.json}'s \texttt{score\_encoding}, $\tau$ ranging over the four registered thresholds -{}- the $\tau \leq L^*$ characterization holds exactly on every one of 1,064 checked cases: the 264 declared grid cells (every combination of the artifact's registered weight vectors and thresholds) plus 200 HEAD-derived random positive-weight vectors at those same four thresholds, each verified against brute-force ground truth over the $s_j = 0$ (``vetoed'') population. The algebraic proof above is elementary (linearity plus the $[0,1]$ bound), and the checker confirms it holds on every case actually computed, not just the ones the hand proof covers analytically.

\proofstatus{machine-checked} (\texttt{checkers/compensation\_check.py}; enumerated domain: $n = 3$ gates, the 125-point \texttt{composition.Response\textasciicircum3} profile space, 264 declared grid cells plus 200 HEAD-derived random probe vectors x 4 thresholds = 1,064 cases, all agreeing with the Lemma's prediction; \texttt{out/checkers/compensation\_check.json}).
\end{corollary}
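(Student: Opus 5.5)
Corollary 2 is a finite statement, so I would prove it in two layers: an analytic reduction to the Lemma (linear family) of Proposition 1, and a finite enumeration that certifies the cases the statement names. The reduction is immediate. The artifact's aggregator is $f(s)=\sum_{i\in\{\mathrm{dq},\mathrm{sarc},\mathrm{green}\}} w_i s_i$, with each $s_i$ the \texttt{score\_encoding} value of the gate's response. Every encoded score lies in $\{0,0.25,0.5,0.75,1\}\subseteq[0,1]$, and $\mathrm{block}\mapsto 0$ is the only preimage of $0$. Hence the ``vetoed'' subset of $\mathrm{Response}^3$ (profiles with at least one \texttt{block}) is exactly the set of encoded profiles with some $s_j=0$.

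The necessity direction of the Lemma then transfers verbatim from the cube to this discrete subset. If $\tau>L^*$, every vetoed encoded profile scores at most $L_j\le L^*<\tau$.

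Sufficiency also survives discretization, and this is the one point I would check explicitly. The witness $s^*$ puts $0$ at $j^*=\operatorname{argmin}_i w_i$ and $1$ elsewhere. That witness is realized in $\mathrm{Response}^3$ by placing \texttt{block} at $j^*$ and \texttt{admit} at the other two gates. Since \texttt{admit} encodes to exactly $1$, the witness is attainable, and the maximum of $f$ over vetoed lattice profiles equals $L^*$ exactly. The finite characterization ``some vetoed profile is admitted iff $\tau\le L^*$'' therefore holds for every nonnegative, not-all-zero weight vector and every $\tau>0$. It holds in particular at the 264 declared cells and at the 200 probe vectors.

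The second layer is the enumeration performed by \texttt{checkers/compensation\_check.py}. For each of the $66+200$ weight vectors and each of the four thresholds, it computes the brute-force ground truth $\exists s$ vetoed with $f(s)\ge\tau$ over the 61 vetoed profiles. It compares that answer to the predicate $\tau\le W-\min_i w_i$ and records agreement on all $4\times 266=1{,}064$ cases. Reporting the checker's output in \texttt{out/checkers/compensation\_check.json} discharges the statement at exactly its declared scope.

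The main obstacle is numerical and definitional rather than mathematical. Floating-point ties at $\tau=L^*$ could flip a comparison, and I would either use exact rationals or compare $f(s^*)$ and $L^*$ through the same summation. The population used must also be the $s_j=0$ set, not the 98-profile Held set. As the note preceding the corollary shows, \texttt{escalate} contributes $0.25\,w_j$ and breaks the Lemma's threshold on the broader set.
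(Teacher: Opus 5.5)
Your proposal is correct and follows the paper's route. The elementary linear-family Lemma (linearity plus the $[0,1]$ bound) supplies the argument, and \texttt{checkers/compensation\_check.py} certifies the $264 + 200 \times 4 = 1{,}064$ cases against brute-force ground truth over the 61 at-least-one-\texttt{block} profiles. Your explicit check that the sufficiency witness (\texttt{block} at $\operatorname{argmin}_i w_i$, \texttt{admit} elsewhere) lies in $\mathrm{Response}^3$ is left implicit in the paper, and it makes the hand proof settle the finite encoding outright, with the checker serving as confirmation.
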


\begin{proposition}[(order invariance without remediation)]
\label{prop:2}

\noindent\textbf{Statement.} If no gate rewrites $(a, c, E)$, the composed verdict is invariant to evaluation order and duplication, and adding a gate never increases permissiveness.

\noindent\textbf{Proof.} $(R, \max)$ is a join semilattice: $\max$ over a finite totally ordered set is idempotent ($\max(r,...,r) = r$), commutative ($\max$ does not depend on argument order), and associative ($\max(\max(a,b),c) = \max(a,\max(b,c))$) -{}- these are properties of $\max$ over any totally ordered set, and $R$'s restrictiveness order (Definition 1) is total. Order invariance and duplication invariance follow directly from commutativity and idempotence. Monotone hardening (adding a gate never increases permissiveness, i.e. never decreases the join) follows because $\max(S \cup \{x\}) \geq \max(S)$ for any finite $S$ and any $x$ -{}- adding an element to a set never lowers its maximum.

\proofstatus{machine-checked} (\texttt{checkers/lattice\_check.py}, exhaustive over every tuple of length 1 through 4 from the real 5-element \texttt{Response} lattice and the real \texttt{compute\_restrictiveness\_join} -{}- 17,151 individual checks across idempotence, commutativity, associativity, monotone hardening, duplication invariance, and the empty-join-is-admit base case; \texttt{out/checkers/lattice\_check.json}).
\end{proposition}

\begin{proposition}[(single-pass unsoundness, scoped to the implemented construction) -{}- retagged per independent review finding (Section 3)]
\label{prop:3}

\noindent\textbf{Why this changed.} The independent review (\texttt{review/REVIEW.md}, Section 3) classified this proposition \texttt{checked-scope-only}: the argument below is a valid existential construction, but no executable checker in this repository covers arbitrary continuous economics or arbitrary gate implementations, so the claim is restated to name exactly what is certified -{}- ``conditional algebraic construction with a substituting DQ decision and cap strictly between pre/post values; implemented S4 across 30 registered seeds'' -{}- rather than an unscoped ``there exist configurations'' claim.

\noindent\textbf{Statement (scoped).} For the substituting-DQ-decision construction below -{}- a corrupted unit cost \texttt{v0} substituted for a governed value \texttt{v'}, with an authority cap \texttt{kappa} placed strictly between \texttt{qty * v0} and \texttt{qty * v'} -{}- single-pass evaluation on $(a, c(a), E(a))$ yields a composed Exec verdict whose executed action $\rho(a)$ violates the authority or resource gate at execution time; and symmetrically, single-pass holds an action whose remediated form is compliant. This construction is implemented as scenario S4 and instantiated across all 30 registered seeds (\texttt{prereg/seeds.json}); the statement is not claimed for arbitrary continuous economics or arbitrary gate implementations beyond this construction.

\noindent\textbf{Proof (construction).} Let the evidence gate substitute a corrupted unit cost \texttt{v0} for the governed value \texttt{v'}, \texttt{v0 != v'}. Single-pass evaluates the authority gate against the ORIGINAL order value \texttt{qty * v0} (Definition 3's $c(a)$, not $c(\rho(a))$), but the EXECUTED action carries the SUBSTITUTED value \texttt{qty * v'} (the DQ gate's own remediation already happened by execution time, since the buffer substitution is applied to the acted-on evidence regardless of which composition mode judged it). Choose an authority cap \texttt{kappa} strictly between \texttt{qty * v0} and \texttt{qty * v'}. Two symmetric cases:

\begin{enumerate}
\item \texttt{qty * v0 <= kappa < qty * v'} (corrupted value understates cost): single-pass judges the cap against \texttt{qty * v0} (admits), but the executed action's true value \texttt{qty * v'} exceeds \texttt{kappa} -{}- a violation the post-hoc audit will flag. This is \path{single_pass_admits_then_violates}.
\item \texttt{qty * v' <= kappa < qty * v0} (corrupted value overstates cost): single-pass judges the cap against \texttt{qty * v0} (holds, over cap), but the remediated action's true value \texttt{qty * v'} is within \texttt{kappa} -{}- \texttt{remediate\_regate} correctly admits it. This is \path{single_pass_holds_remediated_compliant}.
\end{enumerate}

Both directions are constructible and both are registered outcomes of CH2 (\path{ch2_direction_counts}). The artifact instantiates case 1 as its S4 scenario by construction (\path{runner.build_scenarios}'s \path{apply_s4_kappa}, placing \texttt{kappa} between the pre- and post-substitution order values via \path{_compute_s4_kappa}).

\proofstatus{checked-scope-only (REVIEW.md)}. Largest scope certified (\texttt{review/REVIEW.md}, Section 3): a conditional algebraic construction with a substituting DQ decision and cap strictly between pre/post values; implemented S4 across 30 registered seeds. The construction is exhaustive over the two symmetric cases by the intermediate-value structure of the cap placement (there is no third case: \texttt{kappa} is either between the two values, or outside both, and outside-both never diverges by construction), but this is an existence/construction proof over continuous-valued economics, not a finite discrete space a checker enumerates, and it does not cover arbitrary gate implementations. CH1's 30-seed empirical audit (zero violations under \texttt{remediate\_regate}, \path{out/results/sweep_summary.json}'s \texttt{ch1\_supported\_seed\_count == n\_seeds}) and CH2's realized S4 divergences (\path{ch2_divergent_decisions}, non-zero on every one of the 30 seeds) corroborate the construction on real data within this scope, without extending the claim beyond it.
\end{proposition}

\begin{lemma}[(termination, scoped to the current one-shot composition branch) -{}- retagged per independent review finding (Section 3)]
\label{lem:1}

\noindent\textbf{Why this changed.} The independent review (\texttt{review/REVIEW.md}, Section 3) classified this lemma \texttt{checked-scope-only}: the argument depends on an external DQ predicate contract (\path{sarc_dq.gate.PreActionGate}'s own guarantee), which this repository composes but does not itself prove for every possible governed record. The claim is restated to name exactly what is certified -{}- ``current one-shot composition branch and exercised DQ-library behavior in scenarios/tests'' -{}- rather than an unscoped termination claim over all possible buffer substitutions.

\noindent\textbf{Statement (scoped).} For the current one-shot composition branch and the DQ-library behavior exercised in this artifact's scenarios/tests, buffer substitution is idempotent, $\rho(\rho(a)) = \rho(a)$, and $E(\rho(a))$ consists of governed records the evidence gate admits by construction; hence no further remediation is generated and the protocol reaches a fixed point after at most one remediation. This is not claimed to hold for every possible governed record or for any DQ predicate contract beyond what this artifact exercises.

\noindent\textbf{Proof.} \path{composition._remediated_evidence} constructs a single, freshly governed \texttt{EvidenceRecord} from the substituted value, with clean metadata (fresh \texttt{as\_of\_day}/\texttt{retrieved\_day}, \texttt{version=2}, present lineage) -{}- by construction this record cannot itself trigger any of the \texttt{sarc\_dq} predicates that produced the original substitution (staleness, missing fields, lineage, schema type), because those predicates are exactly the ones the remediated record was built to satisfy. A second evaluation of the DQ gate on $E(\rho(a))$ therefore returns \texttt{admit}, not \texttt{substitute} -{}- $\rho$ applied to an already-remediated evidence set is the identity, $\rho(\rho(a)) = \rho(a)$. This is exercised directly: every \texttt{remediate\_regate} call whose Phase I substitutes performs exactly one Phase II re-evaluation (\texttt{composition.py}'s \texttt{remediate\_regate}), never a loop, and \texttt{gates.dq.verdict} after Phase II is \texttt{admit} on every substituted decision across all 30 swept seeds (implied by \path{ch2_direction_counts}/\path{label_only_differences} bookkeeping, which depends on Phase II converging).

\proofstatus{checked-scope-only (REVIEW.md)}. Largest scope certified (\texttt{review/REVIEW.md}, Section 3): the current one-shot composition branch and exercised DQ-library behavior in scenarios/tests. The argument depends on \path{sarc_dq.gate.PreActionGate}'s own predicate contract (that a governed, freshly-dated, complete, single-source record cannot re-trigger the predicates that produced the substitution) -{}- an external engine guarantee this repo composes but does not itself re-verify from the engine's internals, per the standing ``engines are installed libraries, not modified'' invariant. No dedicated checker in this repository proves the engine's predicate contract for every possible governed record; the repository verifies its OWN one-shot (never looping) remediation code path and the DQ-library behavior it actually exercises in scenarios/tests (\path{test_audit_unit.py}'s W2/downroute and evidence-substitution tests).
\end{lemma}

\begin{theorem}[(soundness of \texttt{remediate\_regate} composition)]
\label{thm:1}

\noindent\textbf{Statement.} Under Lemma 1 and gates that are functions of \texttt{(action, context, evidence, current state)}, the \texttt{remediate\_regate} protocol satisfies Definition 3 (per-action soundness): Phase II evaluates every gate on $a_{\mathrm{exec}}$ itself, and the join preserves every Held verdict.

\noindent\textbf{Proof.} By construction, \texttt{remediate\_regate}'s Phase II evaluates the authority gate (\texttt{evaluate\_sarc\_pag}), resource gate (\texttt{evaluate\_green}), and evidence gate a second time against the REMEDIATED context/evidence (\texttt{remediated\_ctx}, \path{remediated_evidence}) -{}- not the original action. The composed response is the join (Definition 2) of these THREE Phase-II verdicts. Since the join of a set never admits unless every member admits/executes (Proposition 2's monotone hardening: the join can only harden, never soften, as more restrictive votes are added), any gate that would hold $a_{\mathrm{exec}}$ at Phase II forces the composed verdict to Held too -{}- there is no way for the executed action to have been admitted by the join while simultaneously being held by one of the very gates that produced that join, because they are the SAME evaluation.

\proofstatus{checked-scope-only (REVIEW.md)}. Retagged per round-two independent review finding R2-N2 (\texttt{review-r2/REVIEW-R2.md}). Largest scope certified: current code path under Lemma 1 and the five-response join. \path{checkers/lattice_check.py} exhaustively certifies Proposition 2's monotone-hardening law -{}- the join mechanism this proof leans on -{}- over the full finite \texttt{Response} lattice (17,151 checks), but that certifies the join mechanism, not Theorem 1's own premises for arbitrary gates, arbitrary action spaces, or arbitrary current-state transitions, nor Lemma 1's external DQ predicate contract this theorem also depends on. Corollary 3 immediately after Corollary 1 below states precisely what the lattice checker does certify. CH1's 30-seed empirical audit (zero violations on every seed, \path{out/results/sweep_summary.json}) corroborates this on real data without substituting for the proof.
\end{theorem}

\setcounter{corollary}{0}
\begin{corollary}[(single-pass soundness, sufficiency only) -{}- weakened per independent review finding F3]
\label{cor:1}

\noindent\textbf{Statement.} If the authority and resource gates are invariant under $\rho$ (the remediation map) on the executed-reachable action set, then single-pass is sound. \textbf{The converse (``only if'') is dropped, not proved.}

\noindent\textbf{Why this changed.} The prior draft stated this as an iff. The independent review (\texttt{review/REVIEW.md}, finding F3) gave a concrete counterexample to the necessity direction: let the authority gate vary under $\rho$ only on actions DQ independently holds -{}- those actions never execute (Held subsumes them regardless of what the authority gate would have said), so single-pass can be fully sound in practice while the gate is, strictly speaking, non-invariant under $\rho$. Non-invariance restricted to never-executed actions costs nothing observable; the necessity direction implicitly assumed invariance was required everywhere $\rho$ is defined, not just on the actions that can actually reach execution, and that assumption is false. No reachability/executability condition is added to recover the iff here -{}- the simpler, honest fix is to state only what is actually proved.

\noindent\textbf{Proof (sufficiency).} From Theorem 1's join argument: Phase II evaluates every gate on $\rho(a)$, and the join preserves every Held verdict there. If the authority/resource gates are invariant under $\rho$ on the actions that end up executed, their single-pass verdicts (computed on the original action $a$) equal their remediate-regate verdicts (computed on $\rho(a)$) for exactly those actions, so single-pass inherits soundness on them from Theorem 1.

\proofstatus{checked-scope-only (REVIEW.md)}. Retagged per round-two independent review (\texttt{review-r2/REVIEW-R2.md}): largest scope certified is sufficiency on executed-reachable actions under Lemma 1 and invariance of downstream gates. The necessity direction is properly retracted above, not weakly supported under any tag. The sufficiency proof is a structural argument from the code path via Theorem 1's join argument -{}- not itself exhaustively re-derived over an unbounded arbitrary action space, but resting on the same finite lattice-backed join mechanism Corollary 3 below certifies, which is why this carries \texttt{checked-scope-only} rather than the more generic \texttt{pending-human-review} it carried before this round's retag.
\end{corollary}

\setcounter{corollary}{2}
\begin{corollary}[(finite response-lattice encoding of \texttt{remediate\_regate} soundness)]
\label{cor:3}

\noindent\textbf{Statement.} For this artifact's finite \texttt{Response} lattice (\texttt{|Response| = 5}) and the join-hardening property \path{checkers/lattice_check.py} exhaustively verifies (17,151 checks over tuples of length 1 through 4): no combination of Phase-II per-gate responses drawn from this five-value lattice can have its join equal \texttt{ADMIT} while any individual response in that combination is Held (\texttt{escalate} or \texttt{block}). This is the exact finite mechanism Theorem 1's join argument invokes, verified over its entire domain -{}- not an instantiation of it, the full domain the checker enumerates.

\proofstatus{machine-checked} (\path{checkers/lattice_check.py}; enumerated domain: \path{Response^k} for \texttt{k = 1..4}, \texttt{|Response| = 5}, 17,151 total law checks across idempotence, commutativity, associativity, monotone hardening, duplication invariance, and the empty-join case; \path{out/checkers/lattice_check.json}).
\end{corollary}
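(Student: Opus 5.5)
The plan is to reduce the statement to two facts about Definition 1's total order: the join is an upper bound of its arguments, and $\mathrm{Held}$ is upward-closed. Together these give a short analytic argument for every tuple length, not only $k \le 4$. The exhaustive enumeration by \path{checkers/lattice_check.py} then confirms that the \emph{implemented} \path{compute_restrictiveness_join} agrees with that argument on the stated finite domain.

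First, I would fix the encoding. The restrictiveness order $\mathrm{admit} \sqsubseteq \mathrm{substitute} \sqsubseteq \mathrm{degrade} \sqsubseteq \mathrm{escalate} \sqsubseteq \mathrm{block}$ is total, so $\sqcup$ over a finite nonempty tuple $(r_1,\dots,r_k)$ is its maximum. From the proof of Proposition 2, $r_i \sqsubseteq \sqcup_j r_j$ for every $i$; this is the $k$-ary monotone-hardening law applied to the singleton $\{r_i\}$ extended by the remaining entries.

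Second, I would observe that $\mathrm{Held} = \{\mathrm{escalate}, \mathrm{block}\}$ is an up-set of this order: anything at or above $\mathrm{escalate}$ is Held. Combining the two facts, if some $r_i \in \mathrm{Held}$, then $\mathrm{escalate} \sqsubseteq r_i \sqsubseteq \sqcup_j r_j$. So the join is itself Held, and in particular cannot equal $\mathrm{admit}$, which lies strictly below $\mathrm{escalate}$. This gives the stronger conclusion that the join lies in $\mathrm{Held}$, and not merely that it differs from $\mathrm{admit}$. That is the form Theorem 1 actually uses.

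Third, for the machine-checked part I would enumerate $\mathrm{Response}^k$ for $k=1,\dots,4$, which is $5+25+125+625 = 780$ tuples. For each tuple I would assert that the real join function returns a value $\sqsupseteq$ every entry. I would also assert that whenever an entry is $\mathrm{escalate}$ or $\mathrm{block}$, the returned value is in $\mathrm{Held}$. These checks sit alongside the existing idempotence, commutativity, associativity, duplication and empty-join checks that make up the 17,151 law checks. The empty tuple is excluded from the statement, since its join is $\mathrm{admit}$ but it contains no Held member.

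The main obstacle is not the mathematics, which is immediate once the join is known to be a maximum. It is the gap between the abstract $\sqcup$ of Definition 1 and the code. If \path{compute_restrictiveness_join} ranked responses by any other key (for example, a score encoding where $\mathrm{escalate}$ sat below an Exec response), the up-set argument would fail. I would therefore have the checker verify the upper-bound property directly against Definition 1's ranking, instead of inferring it from commutativity and associativity alone. That upper-bound check is exactly what licenses calling the finite encoding the mechanism Theorem 1 invokes.
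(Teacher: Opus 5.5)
Your argument is correct, but it takes a different route from the paper. The paper gives no separate argument for this corollary. It rests entirely on \texttt{checkers/lattice\_check.py}'s exhaustive check of generic lattice laws (idempotence, commutativity, associativity, monotone hardening, duplication invariance, empty join) on $\mathrm{Response}^k$, $k\le 4$. The step from monotone hardening to ``no join equals admit when a Held response is present'' is left implicit, with the analytic content relegated to Proposition 2's $\max$ argument.

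You make that step explicit: the join is an upper bound of its arguments, and $\mathrm{Held}$ is an up-set of Definition 1's order. This buys three things the paper's route does not:
\begin{itemize}
\item the result for every tuple length, not only $k\le 4$;
\item the stronger conclusion $\sqcup_j r_j \in \mathrm{Held}$, which is what Theorem 1 and Definition 2's admitted bit $[r^*\in\mathrm{Exec}]$ actually need, whereas the corollary as stated excludes only $r^*=\mathrm{admit}$ and by itself would not rule out a join of substitute or degrade;
\item a precise account of what the code must satisfy.
\end{itemize}

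On that last point, you slightly undersell the paper's checker when you contrast your check with ``commutativity and associativity alone.'' Monotone hardening is among its laws, and together with the base cases it already gives your upper-bound property in the code's own ranking. That suffices for the stated ``$\neq\mathrm{admit}$'' form, because the empty-join case plus monotone hardening makes admit the bottom of that ranking. What none of the listed laws fixes is that escalate and block sit above every Exec response in that ranking. Your extra check against Definition 1's order (or a direct Held-membership assertion) is therefore exactly what the stronger form requires.

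The paper's approach buys machine-checked status for the real \texttt{compute\_restrictiveness\_join} via reusable, generic laws. Yours buys generality and the conclusion Theorem 1 actually uses, at the cost of checks beyond the 17,151 the statement cites.
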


\begin{proposition}[(identity commitment and integrity, relative to a content-addressed store) -{}- restated per independent review finding F2]
\label{prop:4}

\noindent\textbf{Why this changed.} The prior statement here claimed the unified Evidence Set line lets one ``reconstruct, content-addressed, exactly the records each gate relied on'' -{}- language that reads as byte reconstruction. The independent review (\texttt{review/REVIEW.md}, finding F2) correctly refuted that: a substituted line carried a post-Phase-II \texttt{evidence\_id} and a numeric \texttt{substituted\_value}, but no original bytes, no original metadata, and no record of which prior buffer write the substituted value came from. A content-addressed hash is not invertible -{}- it commits to content, it does not carry the content. The claim was materially stronger than what the emitted evidence actually supported.

\noindent\textbf{Fix (schema v2, \texttt{pre\_evidence\_ids} / \texttt{substitute\_source}).} Every substituted line now additionally records \texttt{pre\_evidence\_ids} (the Phase I evidence ids the gate actually evaluated before substitution) and \texttt{substitute\_source} (\texttt{buffer\_key}, plus \texttt{buffer\_write\_eid} -{}- a content-addressed id for the specific \texttt{(key, value)} governed-buffer write the substituted value came from; \path{composition._buffer_write_eid}, \path{schemas/evidence_line.schema.json} v2). This does not make hashes reversible. It makes the claim about what they \textit{do} provide precise instead of overstated.

\noindent\textbf{Statement.} From an executed action's unified Evidence Set line alone, one can (a) \textbf{identify} -{}- by content-addressed id, not by opaque reference -{}- every record each phase relied on, including the Phase I evidence the gate evaluated before any substitution (\texttt{pre\_evidence\_ids}) and the specific governed-buffer write a substitution drew from (\path{substitute_source.buffer_write_eid}); (b) \textbf{verify} those ids, given access to the record store and the governed buffer's write history (not from the line alone): recomputing \path{EvidenceRecord.evidence_id()} over a candidate record and comparing it to the id in the line either confirms or refutes that the candidate is the exact record relied on, since \texttt{evidence\_id()} is a deterministic function of content and two records sharing an id are byte-identical by construction; and (c) \textbf{resolve} the original bytes, given that same store access -{}- the line is the pointer and the integrity check, the store is where the bytes live. The line alone, with no store, gives identity and an integrity check; it does not give bytes.

\noindent\textbf{Proof.} \path{composition.evaluate_dq} $\to$ \path{GateDecision.evidence_ids} already recorded the content-addressed ids of every record passed to \texttt{spec.evaluate}, unchanged from v1 -{}- see \path{gates.dq.evidence_ids}. Schema v2 additionally requires \path{pre_evidence_ids} (Phase I ids, before substitution) and \path{substitute_source} inside \path{remediation.evidence_substitution} whenever it is non-null (\path{schemas/evidence_line.schema.json}, \texttt{required} on both fields, \texttt{additionalProperties: false} throughout so no undeclared channel exists to smuggle in inconsistent data). \texttt{\_buffer\_write\_eid(key, value)} is a pure SHA-256 function of exactly the two values that determine a governed-buffer write's content, so identical writes always yield identical ids and distinct writes (by key or value) always yield distinct ids with overwhelming probability (\path{test_buffer_write_eid_is_content_addressed}, \path{test_audit_unit.py}) -{}- the same content-addressing property \path{EvidenceRecord.evidence_id()} already has. Given a store keyed by these ids (the record store for \path{evidence_ids}/\path{pre_evidence_ids}, the buffer's own write log for \path{buffer_write_eid}), a verifier recomputes each id from a candidate stored object and compares; a match is proof of identity (collision resistance of SHA-256 makes a false match computationally infeasible), a mismatch is proof of divergence. Nothing above claims the id determines the content in the other direction -{}- hashes are one-way by construction, so bytes are never recoverable from the id alone. Authority and resource gate sections are unchanged from v1 (\path{constraints_evaluated}, \path{predicted_cost}/\path{predicted_carbon}/\path{budget_state}) and were never part of the refuted claim.

\noindent\textbf{Update (R2-F6(a), durable buffer write-history log).} Since the round-two review ran, \path{substitute_source.buffer_write_eid} no longer identifies a write by content hash alone. Every \texttt{evaluate\_dq} admit now appends an event to a run-scoped, append-only write log (\path{composition._record_buffer_write}: \texttt{write\_seq}, \texttt{key}, \texttt{value}, \texttt{day}, and an \texttt{event\_id} hashing all four), seeded with one genesis entry per SKU from the buffer's pre-run known-good values (\path{composition._seed_genesis_writes}) so even a substitution that traces back to the buffer's initial state -{}- possible on a SKU's very first decision -{}- resolves to a real entry, not nothing. A substitution's \texttt{buffer\_write\_eid} is resolved against this log (\path{composition._resolve_buffer_write_event}: most recent prior write to the same key with the same value) and persisted alongside the evidence lines as \texttt{<scenario>-<mode>-buffer-writes.jsonl}. This directly answers the review's own provenance probe (3,478 substitution occurrences resolving to only 62 unique ids under the old key+value-only hash): \path{test_repeated_identical_writes_stay_individually_resolvable} (\path{test_audit_unit.py}) reproduces that exact repeated-write pattern and asserts each substitution still resolves to exactly one write-log entry.

\proofstatus{checked-scope-only (REVIEW.md)}. Retagged per round-two independent review (\texttt{review-r2/REVIEW-R2.md}): largest scope certified is schema-v2 field presence and deterministic key/value hash commitments. Schema v2's \texttt{required} fields on \path{remediation.evidence_substitution} are confirmed present and well-formed on real substituted lines by \path{test_provenance_fields_validate_against_schema_v2} and the fuzzed \path{test_randomly_sampled_lines_validate_against_schema_v2}, and the write log's content-addressing property is directly unit tested -{}- these are not exhaustive-finite-model checks under \texttt{checkers/}, so this does not carry \texttt{machine-checked} (round-two finding R2-N2's linter rule: that tag requires a named executable checker module and enumerated domain). The durable write-history log above resolves the specific-write-event-identity residual the round-two review flagged, but store-backed resolution (verifying a candidate record/write against an actual persistent store, not just this run's own JSONL log) remains outside what any test here exercises; this tag is not retroactively upgraded to \texttt{machine-checked} on the strength of this repo's own unit tests without a further review certifying it. This tag covers the identity-commitment and schema-shape claim specifically -{}- it does not and cannot cover ``bytes are recoverable from hashes,'' because that claim is explicitly disclaimed rather than made.
\end{proposition}

\begin{proposition}[(no manufactured coverage)]
\label{prop:5}

\noindent\textbf{Statement.} The composed plane's detected class set equals the union of member gates' detected class sets; composition never detects a class no member covers. Corollary: declared uncovered classes remain uncovered, and an honest composed readout must say so.

\noindent\textbf{Proof.} \path{metrics._ch4_matrix} defines composed detection directly as \texttt{gate\_fired = dq.detected or sarc.verdict != admit or green.verdict != admit} -{}- this is definitionally the boolean union of the three member detectors, not a separately computed quantity that could diverge from it. There is no fourth, composition-level detector anywhere in \texttt{composition.py} that could contribute coverage no member gate contributed. \texttt{union\_ok = (composed\_detected == member\_detected)} is therefore a tautology by construction, not an empirical finding that could fail for an unrelated reason -{}- and the artifact still checks it on every run (rather than assuming it) as a tripwire against a future code change accidentally introducing a fourth detector.

\proofstatus{machine-checked} (tautological by the construction of \path{metrics._ch4_matrix}'s \texttt{gate\_fired} expression, confirmed as holding on every one of the 30 swept seeds -{}- \path{out/results/sweep_summary.json}'s \texttt{ch4\_union\_ok\_seed\_count == n\_seeds == 30} -{}- and directly asserted by \path{test_ch4_matrix_rates_and_union_ok} in \path{test_metrics_unit.py}).
\end{proposition}

\subsection*{CH7 (multi-remediator order dependence)}

\noindent\textbf{Statement} (\texttt{prereg/hypotheses.md}). With two remediation operators active (evidence substitution and resource downroute), either order-independence (confluence) holds, or a concrete non-confluent instance exists. The checker's verdict IS the registered decision rule.

\noindent\textbf{Result.} \path{checkers/remediator_check.py} applies both orderings (substitute-then-downroute, the order \path{composition.remediate_regate} actually runs under W2; downroute-then-substitute, the untried alternative) over a 3x3x3x3x3 = 243-point finite grid of (quantity, corrupted unit cost, governed unit cost, cost budget, carbon budget), reusing the real \path{_remediated_context}/\texttt{\_maybe\_downroute} functions directly. \textbf{86 of 243 grid points disagree} between the two orders. A concrete counterexample: \texttt{qty=10, corrupted=5, governed=10, cost\_budget=60, carbon\_budget=10} -{}- substitute-then-downroute reaches \texttt{qty=5, order\_value=50}; downroute-then-substitute reaches \texttt{qty=10, order\_value=100} (over the cost budget, because the downroute step ran against the UNDERSTATED corrupted cost and never re-checked feasibility after substitution raised the true cost). \textbf{CH7's registered outcome: non-confluent.} This is the formal justification for \texttt{prereg/w2-workflow.md}'s fixed ordering (evidence gate first, then resource gate) rather than an unspecified one: the operators do not commute, and running downroute before the evidence gate has settled the true cost can silently admit an over-budget action -{}- structurally the same failure mode Proposition 3 already identifies for single-pass composition, now shown to recur between two remediators rather than between two composition protocols.

\noindent\textbf{Off-grid probe (promoted per independent review; seed fixed per round-two review finding R2-N1/R2-F1).} The 243-point grid above is a finite lattice of round numbers. The independent review (\texttt{review/REVIEW.md}, Section 5) additionally probed continuous-valued points off that lattice and found 144/200 non-confluent. \path{checkers/remediator_check.py} runs this class of probe itself on every invocation (\path{random_off_grid_points}): 200 continuous points strictly inside the grid's outer bounds, never on a grid coordinate, checked with the same real remediation functions. This probe's seed was originally HEAD-derived (\texttt{sha256(git HEAD)}), the same technique established for the F1 lemma verification in \path{checkers/compensation_check.py} -{}- but that made the published non-confluent count drift with every commit, which the round-two independent review caught (\texttt{review-r2/REVIEW-R2.md}, finding R2-N1: the response commit's paper reported 140/200 while a \texttt{make formal} rerun at the same commit produced 145/200). The seed is now a fixed constant declared in \texttt{prereg/probe-seeds.json} (dated after \texttt{prereg-v1}, since this probe itself postdates \texttt{prereg-v1}), so the published count is commit-stable: \path{out/checkers/remediator_check.json}'s \path{off_grid_probe.non_confluent_count} out of \path{off_grid_probe.n_probes} = 200. It does not reproduce the reviewer's own exact 144/200 (a different, independently drawn probe), but confirms the same off-grid non-confluence beyond the pre-registered grid, and no longer drifts on rerun at a fixed commit or across commits.

\proofstatus{machine-checked} (\path{checkers/remediator_check.py}; enumerated domain: the 243-point finite grid, \path{out/checkers/remediator_check.json}; the sample counterexample is independently re-derived, not just read back from the checker's own report, in \path{test_checkers.py::test_remediator_check_non_confluent_counterexample_is_reproducible}; the off-grid probe above is a supplementary, non-exhaustive finite sample and does not itself extend the machine-checked scope beyond the 243-point grid).

\noindent\textbf{Terminology note (third-review response, v0.5).} The ``non-confluent'' wording above is the checker's own literal \texttt{registered\_outcome} value and the review-report quotation it responds to, both preserved verbatim. Because both remediation orderings terminate at protocol-terminal outcomes in the bounded transition model, the divergent outcomes are non-joinable within that model. Outside this appendix, the manuscript therefore describes this result as the two operators being non-commuting or order-divergent, reserving ``non-confluent'' for this appendix's own exhaustively-verified statement above.

\section{Artifact manifest.}
\label{app:manifest}

\small
\begin{verbatim}
{
  "artifact_license": "Apache-2.0",
  "data_sha256": "4f637bc36ba5cffb56c31cd2b8273a135f89cc65448b28929ee71e73ded8f1d9",
  "engines": {
    "green-sarc": {
      "commit": "8482226b461ccbf91971f90ea6de42cd7a931e5c",
      "license": "Apache-2.0",
      "version": "0.4.1"
    },
    "sarc-dq": {
      "commit": "db6c396128a4df7fe12d13be163b1e7d32087177",
      "license": "Apache-2.0",
      "version": "0.1.0"
    },
    "sarc-governance": {
      "commit": "74df5de29a305a76bf3cc97a185ece2f11b1b833",
      "license": "Apache-2.0",
      "version": "0.3.0"
    }
  },
  "prereg_v1_sha": "31552b2ee6548787e766b0253498014eb0a5093c",
  "seed": 26313
}
\end{verbatim}
\normalsize

\section{Statistical methodology (V3 gate)}
\label{app:statistical-methodology}

30 seeds (\texttt{prereg/seeds.json}, first seed 26313, generated by \texttt{random.Random(26313)} sampling without replacement from $[1, 2^{31}-1]$, frozen before any Phase 3 code ran). Confidence intervals are two-tailed 95\% t-intervals, mean +/- t\_crit(df=29, alpha=0.05) * (stdev / sqrt(30)), t\_crit = scipy.stats.t.ppf(0.975, 29), computed exactly at runtime (fixed per independent review finding F6: the prior rounded constant 2.045230 diverged from the exact value by up to 3.77e-6 in some CI endpoints -{}- see \texttt{review/REVIEW.md}); implementation in sweep.py's mean\_ci95, using math.fsum over explicitly sorted inputs for mean and variance (fixed per independent review round-two finding R2-N3: naive sum()/len() is order-dependent at the ULP level, unlike fsum) and scipy (pinned in bootstrap.sh) for the t-critical value. Scenario economics (budgets, caps) are calibrated once from the registered baseline seed and held fixed across the sweep; only each scenario's decision-stream seed varies -{}- the same ``treatment'', repeated draws, which is the design this kind of robustness claim requires. Per-seed summaries (no raw per-decision data) are checkpointed under out/results/sweep/ and are part of this artifact's committed history. Some symmetric t-intervals for low non-negative-count metrics extend slightly below zero (Section 10); this is presented as computed, without truncation or bootstrap re-estimation. The confidence intervals are descriptive summaries of between-seed variation and are not the basis of the paper's formal correctness claims or preregistered Boolean support decisions; exhaustive/model-checked claims and seed-level decision rules are evaluated independently of these intervals.

\section*{Validation note}

The artifact underwent four commissioned adversarial automated review rounds under published protocols. Those reviews informed corrections to claims, proofs, provenance, and reproducibility checks. Automated review does not substitute for human peer review; full reports, protocols, evidence, and process provenance are available in the companion repository.

The round-four terminal review report states, verbatim: ``This artifact was independently replicated and adversarially reviewed in four rounds by automated agents following the published protocols in \mbox{sarc-suite-agent-review.md}, \mbox{sarc-suite-agent-review-r2.md}, \mbox{sarc-suite-agent-review-r3.md}, and \mbox{sarc-suite-agent-review-r4.md}; the round-one report and evidence are at \mbox{review/REVIEW.md} and \mbox{review/review.json}, the round-two report and evidence are at \mbox{review-out-r2/REVIEW-R2.md}, \mbox{review-out-r2/review.json}, and \mbox{review-out-r2/evidence/}, the round-three report and evidence are at \mbox{review-out-r3/REVIEW-R3.md}, \mbox{review-out-r3/review.json}, and \mbox{review-out-r3/evidence/}, and the round-four report and evidence are at \mbox{review-out-r4/REVIEW-R4.md}, \mbox{review-out-r4/review.json}, and \mbox{review-out-r4/evidence/}. Automated review complements and does not replace human peer review.'' (\mbox{review-r4/REVIEW-R4.md}, Author-reuse disclosure.)

The quotation preserves the report exactly. In this release repository, the imported round-two, round-three, and round-four reports and JSON mirrors are packaged under \mbox{review-r2/}, \mbox{review-r3/}, and \mbox{review-r4/} respectively; the original \mbox{review-out-r2/}, \mbox{review-out-r3/}, and \mbox{review-out-r4/} names refer to the external review workspaces in which those reports were issued.

Process provenance: the review protocols were authored by the author's AI assistant before each round and are committed in this repository; the reviews were commissioned by the author and executed by a separate vendor's automated agent in fresh sessions with access only to this public repository; adjudication between rounds was performed by the same assistant that authored the protocols and is not independent; every mechanical claim in the review reports is re-runnable from this repository.

Version 0.4 revises framing, positioning, and related-work coverage following a commissioned second review (review-secondary/), with all measured results, claim semantics, PROOF-STATUS tags, and the four-round review chain unchanged. Version 0.5 is a final scholarly-positioning, formal-precision, and terminology pass following a third commissioned review round (review-secondary/), with all measured results, claim semantics, PROOF-STATUS tags, and counts unchanged from v0.4. Versions 0.1 to 0.4 remain recoverable as immutable historical revisions in repository history.

This draft is the artifact's response to that review: findings F1-F6 (\mbox{review/REVIEW.md} Section 4) are fixed or weakened per the review's own binding-unless-fixed policy, never argued with in this text; the proof-tag reclassifications in Appendix A (checked-scope-only for Proposition 3 and Lemma 1, sufficiency-only for Corollary 1) follow the review's Section 3 proof-tag policy exactly, and no tag here claims more than that review certified.

\textbf{Acknowledgements.} Drafting, engineering, review automation, and a positioning review were AI-assisted (Claude, Perplexity, and OpenAI GPT systems); the author is solely responsible for all claims.

\nocite{*}
\bibliographystyle{plain}
\bibliography{refs}

\end{document}